\newtheorem{theorem}{Theorem}
\newtheorem{definition}{Definition}
\newtheorem{proposition}[definition]{Proposition}
\newtheorem{lemma}[definition]{Lemma}
\newtheorem{remark}[definition]{Remark}
\newtheorem{example}[definition]{Example}
\newtheorem{process}[definition]{Process}
\newcommand{\rwa}{\textsc{FrogWild}}
\newcommand{\gla}{\textsc{GraphLab PR}}
\begin{document}


\title{FrogWild! -- Fast PageRank Approximations \\ on Graph Engines}


\numberofauthors{3} 

\author{
%
%
\alignauthor
Ioannis Mitliagkas\\
\affaddr{ECE, UT Austin}\\
\email{ioannis@utexas.edu}
\alignauthor
Michael Borokhovich \\
\affaddr{ECE, UT Austin}\\
\email{michaelbor@utexas.edu}
\and
\alignauthor
Alexandros G.~Dimakis\\
\affaddr{ECE, UT Austin}\\
\email{dimakis@austin.utexas.edu}
\alignauthor
Constantine Caramanis\\
\affaddr{ECE, UT Austin}\\
\email{constantine@utexas.edu}
}

\maketitle

\begin{abstract}
We propose \rwa, a novel algorithm for fast approximation of high PageRank vertices, geared towards reducing network costs of running traditional PageRank algorithms. Our algorithm can be seen as a quantized version of power iteration that performs multiple parallel random walks over a directed graph. One important innovation is that we introduce a modification to the GraphLab framework that only partially  synchronizes mirror vertices. This partial synchronization vastly reduces the network traffic generated by traditional PageRank algorithms, thus greatly reducing the per-iteration cost of PageRank. On the other hand, this partial synchronization also creates dependencies between the random walks used to estimate PageRank. Our main theoretical innovation is the analysis of the correlations introduced by this partial synchronization process and a bound establishing that our approximation is close to the true PageRank vector. 

We implement our algorithm in GraphLab and compare it against the default PageRank implementation. We show that our algorithm is very fast, performing each iteration in less than one second on the Twitter graph and can be up to $7 \times$ faster compared to the standard GraphLab PageRank implementation.
\end{abstract}

\newcommand{\dout}{d_{\mathrm{out}}}

\section{Introduction}
 
Large-scale graph processing is becoming increasingly important for the analysis of data from social networks, web pages, bioinformatics and recommendation systems. 
Graph algorithms are difficult to implement in distributed computation frameworks like Hadoop MapReduce and Spark. For this reason several in-memory graph engines like Pregel, Giraph, GraphLab and GraphX
\cite{malewicz2010pregel,graphlab,xin2013graphx,navigating} are being developed. There is no full consensus on the fundamental abstractions of graph processing frameworks but certain patterns such as vertex programming and the Bulk Synchronous Parallel (BSP) framework seem to be increasingly popular. 

PageRank computation~\cite{page1999pagerank}, which gives an estimate of the importance of each vertex in the graph, is a core component of many search routines; more generally, it represents, de facto, one of the canonical tasks performed using such graph processing frameworks. Indeed, while important in its own right, it also represents the memory, computation and communication challenges to be overcome in large scale iterative graph algorithms.

In this paper we propose a novel algorithm for fast {\em approximate calculation} of high PageRank vertices. Note that even though most previous works calculate the complete PageRank vector (of length in the millions or billions), in many graph analytics scenarios a user wants a quick estimation of the most important or relevant nodes -- distinguishing the $10^{th}$ most relevant node from the $1,000^{th}$ most relevant is important; the $1,000,000^{th}$ from the $1,001,000^{th}$ much less so. A simple solution is to run the standard PageRank algorithm for fewer iterations (or with an increased tolerance). While certainly incurring less overall cost, the per-iteration cost remains the same; more generally, the question remains whether
there is a more efficient way to approximately recover the heaviest PageRank vertices.

There are many real life applications that may benefit from a fast top-k PageRank algorithm. One example is \emph{growing loyalty of influential customers} \cite{teradata_loyal_customers}.  In this application, a telecom company identifies the top-k influential customers using the top-k PageRank on the customers' activity (e.g., calls) graph. Then, the company invests its limited budget on improving user experience for these top-k customers, since they are most important for building good reputation. 
Another interesting example is an application of PageRank for finding keywords and key sentences in a given text. In \cite{mihalcea2004textrank}, the authors show that  PageRank performs better than known machine learning techniques for keyword extraction. Each unique word (noun, verb or an adjective) is regarded as a vertex and there is an edge between two words if they occur in close proximity in the text. Using approximate top-k PageRank, we can identify the top-k keywords much faster than obtaining the full ranking. When keyword extraction is used by time sensitive applications or for an ongoing analysis of a large number of documents, speed becomes a crucial factor. 
The last example we describe here is the application of PageRank for online social networks (OSN). It is important in the context of OSNs to be able to predict which users will remain active in the network for a long time. Such \emph{key users} play a decisive role in developing effective advertising strategies and sophisticated customer loyalty programs, both vital for generating revenue \cite{heidemann:identifying}. Moreover, the remaining users can be leveraged, for instance for targeted marketing or premium services. It is shown in \cite{heidemann:identifying} that PageRank  is a much more efficient predictive measure than other centrality measures. The main innovation of \cite{heidemann:identifying} is the usage of a mixture of connectivity and activity graphs for PageRank calculation. Since these graphs are highly dynamic (especially the user activity graph), PageRank should be recalculated constantly. Moreover, the \emph{key users} constitute only a small fraction of the total number of users, thus, a fast approximation for the top-PageRank nodes constitutes a desirable alternative to the exact solution.

In this paper we address this problem. Our algorithm (called \rwa\ for reasons that will become subsequently apparent) significantly outperforms the simple reduced iterations heuristic in terms of {\em running time, network communication and scalability}. We note that, naturally, we compare our algorithm and reduced-iteration-PageRank within the same framework: we implemented our algorithm in GraphLab PowerGraph and compare it against the built-in PageRank implementation. A key part of our contribution also involves the proposal of what appears to be simply a technically minor modification within the GraphLab framework, but nevertheless results in significant network-traffic savings, and we believe may nevertheless be of more general interest beyond PageRank computations. 

\textbf{Contributions:} We consider the problem of fast and efficient (in the sense of time, computation and communication costs) computation of the high PageRank nodes, using a graph engine. To accomplish this we propose and analyze an new PageRank algorithm specifically designed for the graph engine framework, and, significantly, we propose a modification of the standard primitives of the graph engine framework (specifically, GraphLab PowerGraph), that enables significant network savings. We explain in further detail both our objectives, and our key innovations. 

Rather than seek to recover the full PageRank vector, we aim for the top $k$ PageRank vertices (where $k$ is considered to be approximately in the order of $10-1000$). Given an output of a list of $k$ vertices, we define two natural accuracy metrics
that compare the true top-$k$ list with our output.  The algorithm we propose, \rwa\,  operates by starting a small (sublinear in the number of vertices $n$) number of random walkers (\textit{frogs}) that jump randomly on the directed graph. The random walk interpretation of PageRank enables the frogs to jump to a completely random vertex (teleport) with some constant probability (set to $0.15$ in our experiments, following standard convention). After we allow the frogs to jump for time equal to the mixing time of this non-reversible Markov chain, their positions are sampled from the invariant distribution $\pi$ which is normalized PageRank. The standard PageRank iteration can be seen as the continuous limit of this process (\textit{i.e.}, the frogs become water), which is equivalent to power iteration for stochastic matrices. 

The main {\em algorithmic contributions} of this paper are comprised of the following three innovations. First, we argue that discrete frogs (a quantized form of power iteration) is significantly better for distributed computation when one is interested only in the large entries of the eigenvector $\pi$. This is because each frog produces an independent sample from $\pi$. If some entries of $\pi$ are substantially larger and we only want to determine those, a small number of independent samples suffices. We make this formal using standard Chernoff bounds (see also \cite{sarma2011estimating,das2013distributed} for similar arguments). On the contrary, during standard PageRank iterations, vertices pass messages to all their out-neighbors since a non-zero amount of water must be transferred. This tremendously increases the network bandwidth especially when the graph engine is over a cluster with many machines. 

One major issue with simulating discrete frogs in a graph engine is teleportations. Graph frameworks partition vertices to physical nodes and restrict communication on the edges of the underlying graph. Global random jumps would create dense messaging patterns that would increase communication. Our second innovation is a way of obtaining an identical sampling behavior without teleportations. We achieve this by initiating the frogs at uniformly random positions and having them perform random walks for a life span that follows a geometric random variable. The geometric probability distribution depends on the teleportation probability and can be calculated explicitly. 

Our third innovation involves a simple proposed modification for graph frameworks. Most modern graph engines (like GraphLab PowerGraph~\cite{gonzalez2012powergraph}) employ vertex-cuts as opposed to edge-cuts. This means that each vertex of the graph is assigned to multiple machines so that graph edges see a local vertex mirror. One copy is assigned to be the master and maintains the master version of vertex data while remaining replicas are mirrors that maintain local cached read--only copies of the data. Changes to the vertex data are made to the master and then replicated to all mirrors at the next synchronization barrier. This architecture is highly suitable for graphs with high-degree vertices (as most real-world graphs are) but has one limitation when used for a few random walks: imagine that vertex $v_1$ contains one frog that wants to jump to $v_2$. 
If vertex $v_1$ has very high degree, it is very likely that multiple replicas of that vertex exist, possibly one in each machine in the cluster. In an edge-cut scenario only one message would travel from $v_1 \rightarrow v_2$, assuming $v_1$ and $v_2$ are located in different physical nodes. However, when vertex-cuts are used, the state of $v_1$ is updated (i.e., contains no frogs now) and this needs to be communicated to all mirrors. It is therefore possible that \textit{a single random walk can create a number of messages equal to the number of machines in the cluster}. 

We modify PowerGraph to expose a scalar parameter $p_s$ per vertex. By default, when the framework is running, in each super-step all masters synchronize their programs and vertex data with their mirrors. Our modification is that for each mirror we flip an independent coin and synchronize with probability $p_s$. Note that when the master does not synchronize the vertex program with a replica, that replica will not be active during that super-step. Therefore, we can avoid the communication and CPU execution by performing limited synchronization in a randomized way. 

\rwa\ is therefore executed asynchronously but relies on the Bulk Synchronous execution mode of PowerGraph with the additional simple randomization we explained. The name of our algorithm is inspired by HogWild~\cite{hogwild}, a lock-free asynchronous stochastic gradient descent algorithm proposed by Niu \textit{et al.}. We note that PowerGraph does support an asynchronous execution mode~\cite{gonzalez2012powergraph} but we implemented our algorithm by a small modification of synchronous execution. As discussed in~\cite{gonzalez2012powergraph}, the design of asynchronous graph algorithms is highly nontrivial and involves locking protocols and other complications. Our suggestion is that for the specific problem of simulating multiple random walks on a graph, simply randomizing synchronization can give significant benefits while keeping design simple. 

While the parameter $p_s$ clearly has the power to significantly reduce network traffic -- and indeed, this is precisely born out by our empirical results -- it comes at a cost: the standard analysis of the Power Method iteration no longer applies. The main challenge that arises is the theoretical analysis of the \rwa\  algorithm. The model is that each vertex is separated across machines and each connection between two vertex copies is present with probability $p_s$. A single frog performing a random walk on this new graph defines a new Markov Chain and this can be easily designed to have the same invariant distribution $\pi$ equal to normalized PageRank. The complication is that the trajectories of frogs are {\em no longer independent}: if two frogs are in vertex $v_1$ and (say) only one mirror $v'_1$ synchronizes, both frogs will need to jump through edges connected with that particular mirror. Worse still, this correlation effect increases, the more we seek to improve network traffic by further decreasing $p_s$. Therefore, it is no longer true that one obtains independent samples from the invariant distribution $\pi$. Our theoretical contribution is the development of an analytical bound that shows that these dependent random walks still can be used to obtain $\hat{\pi}$ that is provably close to $\pi$ with high probability. We rely on a coupling argument combined with an analysis of pairwise intersection probabilities for random walks on graphs. In our convergence analysis we use the {\em contrast bound} \cite{pierre1999markov} for non-reversible chains.

\newcommand{\simplex}{\Delta}
\newcommand{\simplexn}{\simplex^{n-1}}
\newcommand{\pmeet}{p_{\cap}}

\subsection{Notation}
Lowercase letters denote scalars or vectors.
Uppercase letters denote matrices.
The $(i,j)$ element of a matrix $A$ is $A_{ij}$.
We denote 
the transpose of a matrix $A$ by $A'$.
For a time-varying vector $x$, we denote its value at time $t$ by $x^t$. 
When not otherwise specified, $\|x\|$ denotes the $l_2$-norm of vector x.
We use $\simplexn$ for the probability simplex in $n$ dimensions, and
and $e_i \in \simplexn$ for the indicator vector for item $i$. For example,
$e_1 = [1, 0, ... 0]$.
For the set of all integers from $1$ to $n$ we write $[n]$.

\section{Problem and Main Results}
We now make precise the intuition and outline given in the introduction. We first define the problem, giving the definition of PageRank, the PageRank vector, and therefore its top elements. We then define the algorithm, and finally state our main analytical results.

\subsection{Problem Formulation}
\label{sec:problemformulation}
Consider a directed graph $G=(V,E)$ with $n$ vertices ($|V| = n$) and let $A$ denote its adjacency matrix. That is, $A_{ij}=1$ if there is an edge from $j$ to $i$. Otherwise, the value is $0$. Let $\dout(j)$ denote the number of {\em successors} (out-degree) of vertex $j$ in the graph.
We assume that all nodes have at least one successor, $\dout(j)>0$.
Then we can define the transition probability matrix $P$ as follows:
\begin{equation}
\label{eqn:tpm}
	P_{ij} = A_{ij} / \dout(j).
\end{equation}
The matrix is left-stochastic, which means that each of its rows sums to $1$. We call $G(V,E)$ the {\it original graph}, as opposed to the PageRank graph which includes a probability of transitioning to any given vertex. We now define this transition probability matrix, and the PageRank vector.

\begin{definition}[PageRank \cite{page1999pagerank}]
\label{def:pagerank}
Consider the matrix 
\[
		Q \triangleq (1-p_T)  P + p_T {1 \over n} 1_{n \times n}.
\]
where $p_T \in [0, 1]$ is a parameter, most commonly set to $0.15$.
The PageRank vector $\pi \in \simplexn$ is defined as the principal right eigenvector of $Q$. That is,
$
		\pi \triangleq v_1(Q).
$
By the Perron-Frobenius theorem, the corresponding eigenvalue is $1$. This implies the {\em fixed-point} characterization of the PageRank vector,
$
		\pi = Q \pi
$.
\end{definition}

The PageRank vector assigns high values to {\em important} nodes. Intuitively, important nodes have many important predecessors (other nodes that point to them).
 This recursive definition is what makes PageRank robust to manipulation, but also expensive to compute. It can be recovered by exact eigendecomposition of $Q$, but at real problem scales this is prohibitively expensive. In practice, engineers often use a few iterations of the power method to get a "good-enough" approximation.

The definition of PageRank hinges on the left-stochastic matrix $Q$, suggesting a connection to Markov chains. Indeed, this connection is well documented and studied \cite{agarwal2007learning,fortunato2007random}. An important property of PageRank from its random walk characterization, is the fact that $\pi$ is the invariant distribution for a Markov chain with dynamics described by $Q$. A non-zero $p_T$, also called the {\em teleportation probability}, introduces a uniform component to the PageRank vector $\pi$. We see in our analysis that this implies ergodicity and faster mixing for the random walk.

\subsubsection{Top PageRank Elements}

Given the true PageRank vector, $\pi$ and an estimate $v$ given by an approximate PageRank algorithm, we define the top-$k$ accuracy using two metrics. 

\begin{definition}[Mass Captured] 
\label{def:capturedmass}
Given a distribution $v \in \simplexn$, the true PageRank distribution $\pi \in \simplexn$ and an integer $k \geq 0$, we define the {\em mass captured} by $v$ as follows.
\[
	\mu_k(v) \triangleq  \pi(\mathrm{argmax}_{|S|=k}v(S))
\]
For a set $S \in [n]$, $v(S) = \sum_{i \in S} v(i)$ denotes the total mass ascribed to the set by the distribution $v \in \simplexn$.
\end{definition}

Put simply, the set $S^*$ that gets the most mass according to $v$ out of all sets of size $k$, is evaluated according to $\pi$ and that gives us our metric. It is maximized by $\pi$ itself, i.e. the optimal value is $\mu_k(\pi)$.

The second metric we use is the exact identification probability, \textit{i.e.} the fraction of elements in the output list that are also in the true top-$k$ list. Note that the second metric is limited in that it does not give partial credit 
for high PageRank vertices that were not in the top-$k$ list. In our experiments in Section~\ref{sec:experiments}, we mostly use the normalized captured mass accuracy metric but also report the exact identification probability for some cases -- typically the results are similar. 

We subsequently describe our algorithm. We attempt to approximate the heaviest elements of the invariant distribution of a Markov Chain, by simultaneously performing multiple random walks on the graph. The main modification to PowerGraph, is the exposure of a parameter, $p_s$, that controls the probability that a given master node synchronizes with any one of its mirrors. Per step, this leads to a proportional reduction in network traffic. The main contribution of this paper is to show that we get results of comparable or improved accuracy, while maintaining this network traffic advantage. We demonstrate this empirically in Section~\ref{sec:experiments}.

\vspace{2mm}
\subsection{Algorithm}
\label{sec:algo}

During setup, the graph is partitioned using GraphLab's default ingress algorithm. At this point each one of $N$ frogs
is born on a vertex chosen uniformly at random. Each vertex $i$ carries a counter initially set to $0$ and denoted by $c(i)$. Scheduled vertices execute the following program.

Incoming frogs from previously executed vertex programs, are collected by the \texttt{init()} function. At \texttt{apply()} every frog dies with probability 
$p_T = 0.15$. This, along with a uniform starting position, effectively simulates the $15\%$ uniform component from Definition~\ref{def:pagerank}.

A crucial part of our algorithm is the change in  synchronization behaviour. The $<$\texttt{sync}$>$ step only synchronizes a $p_s$ fraction of mirrors leading to commensurate gains in network traffic (cf.\ Section~\ref{sec:experiments}). This patch on the GraphLab codebase was only a few lines of code. Section~\ref{sec:experiments} contains more details regarding the implementation.

The \texttt{scatter()} phase is only executed for edges $e$ incident to a mirror of $i$ that has been synchronized. Those edges draw a binomial number of frogs to send to their other endpoint.
The rest of the edges perform no computation.
The frogs sent to vertex $j$ at the last step will be collected at the \texttt{init()} step when $j$ executes.

\vspace{0.1in}
\noindent\begin{minipage}{.48\textwidth}
\rule{\textwidth}{1pt}

\vspace{0.05in}

{\large \textbf{FrogWild!} }vertex program

\vspace{-0.07in}
\rule{\textwidth}{1pt}

{\small

\textbf{Input parameters}: $p_s$, $p_T=0.15$, $t$
\begin{itemize}
\item[\textbf{apply($i$)}] $K(i) \leftarrow [\textrm{\# incoming frogs}]$
\item[] If $t$ steps have been performed, $c(i) \leftarrow c(i)+K(i)$ and halt.
\item[] For every incoming frog: 
\begin{itemize}
\item[] With probability $p_T$, frog dies:
\begin{itemize}
\item[] $c(i) \leftarrow c(i)+1$,
\item[] $K(i) \leftarrow K(i)-1$.
\end{itemize}
\end{itemize}
\item[$<$\textbf{sync}$>$]  For every {\em mirror} $m$ of vertex $i$:
\begin{itemize}
\item[] With probability $p_s$:
\begin{itemize}
\item[] Synchronize state with mirror $m$.
\end{itemize}
\end{itemize}
\item[\textbf{scatter($e=(i,j)$)}] [Only on synchronized mirrors]

\item[] Generate Binomial number of frogs: \[x \sim Bin\left(K(i), { 1 \over \dout(i) p_s}\right) \]
\item[] Send $x$ frogs to vertex $j$: \texttt{signal(j,x)}
\end{itemize}
\vspace{0.05in}

}
\vspace{-0.15in}
\rule{\textwidth}{1pt}

\end{minipage} 
\vspace{0.1in}

Parameter $p_T$ is the teleportation probability from the random surfer model in \cite{page1999pagerank}. To get PageRank using random walks, one could adjust the transition matrix $P$ as described in Definition~\ref{def:pagerank} to get the matrix $Q$. Alternatively, the process can be replicated by a random walk following the original matrix $P$, and teleporting at every time, with probability $p_T$. The destination for this teleportation is chosen uniformly at random from $[n]$. We are interested in the position of a walk at a predetermined point in time as that would give us a sample from $\pi$. This holds as long as we allow enough time for mixing to occur.

Due to the inherent markovianity in this process, one could just consider it starting from the last teleportation before the predetermined stopping time. When the mixing time is large enough, the number of steps performed between the last teleportation and the predetermined stopping time, denoted by $X$, is geometrically distributed with parameter $p_T$. This follows from the time-reversibility in the teleportation process: inter-teleportation times are geometrically distributed, so as long as the first teleportation event happens before the stopping time, then $X \sim \mathrm{Geom(p_T)}$.

This establishes that, the \rwa !  process -- where a frog performs a geometrically distributed number of steps following the original transition matrix $P$ -- closely mimics a random walk that follows the adjusted transition matrix, $Q$. In practice, we stop the process after $t$ steps to get a good approximation. To show our main result, Theorem~\ref{thm:main},  we analyze the latter process.

Using a binomial distribution to independently generate the number of frogs in the \texttt{scatter()} phase closely models the effect of random walks. The marginal distributions are correct, and the number of frogs, that did not die during the \texttt{apply()} step, is preserved in expectation. For our implementation we resort to a more efficient approach. Assuming $K(i)$ frogs survived the \texttt{apply()} step, and $M$ mirrors where picked for synchronization, then we send $\lceil {K(i) \over M }\rceil$ frogs to $\min(K(i), M)$ mirrors. If the number of available frogs is less than the number of synchronized mirrors, we pick $K(i)$ arbitrarily.

\subsection{Main Result}

Our analytical results essentially provide a high probability guarantee that our algorithm produces a solution that approximates well the PageRank vector. Recall that the main modification of our algorithm involves randomizing the synchronization between master nodes and mirrors. For our analysis, we introduce a broad model to deal with partial synchronization, in Appendix~\ref{sec:erasuremodel}.

Our results tell us that partial synchronization does not change the distribution of a single random walk. To make this and our other results clear, we need the simple definition.
\begin{definition}
We denote the state of random walk $i$ at its $t^{th}$ step  by $s_i^t$.
\end{definition}

Then, we see that $\mathbb{P}\big( s_1^{t+1} = i \big| s_1^t = j  \big) = { 1 /\dout(j)}$,
and $x_1^{t+1} = P x_1^t.$
This follows simply by the symmetry assumed in Definition~\ref{def:erasuremodel}.
Thus if we were to sample in serial, the modification of the algorithm controlling (limiting) synchronization would not affect each sample, and hence would not affect our estimate of the invariant distribution. However, we start multiple (all) random walks simultaneously. In this setting, the fundamental analytical challenge stems from the fact that any set of random walks with intersection are now correlated. The key to our result is that we can control the effect of this correlation, as a function the parameter $p_s$ and the {\it pairwise probability} that two random walks intersect. We define this formally.
\begin{definition}
\label{def:pmeet}
Suppose two walkers $l_1$ and $l_2$ start at the same time and perform $t$ steps. The probability that they meet is defined as follows.
\begin{equation}
	\pmeet(t) \triangleq 
	\mathbb{P}\left(
			\exists\ \tau \in [0, t],\ \mathrm{s.t.}\ 
			s_{l_1}^{\tau} = s_{l_2}^{\tau} 
	\right)
\end{equation}
\end{definition}

\begin{definition}[Estimator]
\label{def:estimator}
Given the positions of $N$ random walks at time $t$, $\{s_l^t\}_{l=1}^N$, we define the following estimator for the invariant distribution $\pi$.
\begin{equation}
	\hat{\pi}_N(i) \triangleq { \left| \{  l  : l \in [N], s_l^{t}=i   \}\right| \over N  } = { c(i) \over N} 
\end{equation}
Here $c(i)$ refers to the tally maintained by the \rwa! vertex program.
\end{definition}

Now we can state the main result. Here we give a guarantee for the quality of the solution furnished by our algorithm. 

\begin{theorem}[Main Theorem]
\label{thm:main}
Consider $N$ frogs following the \rwa!\ process (Section~\ref{sec:algo}), under the erasure model of Definition~\ref{def:erasuremodel}. The frogs start at independent locations, distributed uniformly and stop after a geometric number of steps or, at most, $t$ steps.
The estimator $\hat{\pi}_N$ (Definition~\ref{def:estimator}),
captures mass close to the optimal. Specifically, with probability at least $1-\delta$,
\[
	\mu_k(\hat{\pi}_N) \geq \mu_k(\pi) - \epsilon,
\]
where 
\begin{equation}
\label{eqn:mainresult}
\epsilon <  \sqrt{(1-p_T)^{t+1} \over p_T }
   +  \sqrt{
		{ k \over \delta}
		\left[
			{ 1  \over N } + (1 - p_s^2)\pmeet(t)
		\right]
	}.
\end{equation}
\end{theorem}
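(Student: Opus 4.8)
The plan is to decompose the error $\mu_k(\pi)-\mu_k(\hat{\pi}_N)$ into a deterministic \emph{bias}, from truncating each walk after $t$ steps, and a stochastic \emph{variance}, from using $N$ mutually correlated samples. Let $x \triangleq \mathbb{E}[\hat{\pi}_N]$ be the law of a single frog's position at time $t$; because partial synchronization does not alter the marginal law of one walk (as observed just before Definition~\ref{def:estimator}), $x$ is exactly the distribution sampled by one truncated \rwa!\ walk and $\hat{\pi}_N$ is unbiased for it. Writing $S^\star \triangleq \mathrm{argmax}_{|S|=k}\pi(S)$ and $\hat{S}\triangleq\mathrm{argmax}_{|S|=k}\hat{\pi}_N(S)$ and using $\hat{\pi}_N(\hat{S})\ge\hat{\pi}_N(S^\star)$, I would first establish the deterministic reduction
\[
\mu_k(\pi)-\mu_k(\hat{\pi}_N) \le 2\max_{|S|=k}|\pi(S)-\hat{\pi}_N(S)| \le 2\max_{|S|=k}|\pi(S)-x(S)| + 2\max_{|S|=k}|x(S)-\hat{\pi}_N(S)|,
\]
so that it suffices to bound the bias $\max_{|S|=k}|\pi(S)-x(S)|$ and the fluctuation $\max_{|S|=k}|x(S)-\hat{\pi}_N(S)|$ separately, universal constants being absorbed into the final bound.

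For the bias I would use the teleportation characterization of PageRank discussed in Section~\ref{sec:algo}, namely $\pi = p_T\sum_{\tau\ge 0}(1-p_T)^\tau P^\tau u$ with $u$ the uniform distribution over $[n]$. The truncated law $x$ agrees with this series for $\tau\le t$ and replaces the tail $\tau>t$ by the mass $\mathbb{P}(\mathrm{Geom}(p_T)>t)=(1-p_T)^{t+1}$ placed on $P^t u$. Hence $\pi-x$ is a difference of two sub-probability vectors, each of total mass $(1-p_T)^{t+1}$, and since $(P^\tau u)(S)\in[0,1]$ for every $S$ I get $\max_{|S|=k}|\pi(S)-x(S)|\le (1-p_T)^{t+1}$, with no dependence on $k$. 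Because $(1-p_T)^{t+1}\le 1$ and $p_T\le\frac14$, one has $2(1-p_T)^{t+1}\le\sqrt{(1-p_T)^{t+1}/p_T}$, which is precisely the first term of~(\ref{eqn:mainresult}).

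For the fluctuation, Cauchy--Schwarz gives $|x(S)-\hat{\pi}_N(S)|=\big|\sum_{i\in S}(x(i)-\hat{\pi}_N(i))\big|\le\sqrt{k}\,\|\hat{\pi}_N-x\|$ uniformly over $|S|=k$, which replaces a union bound over the exponentially many size-$k$ sets by control of a single $\ell_2$ norm. I would then bound the second moment
\[
\mathbb{E}\,\|\hat{\pi}_N-x\|^2 = \sum_i \mathrm{Var}\big(\hat{\pi}_N(i)\big) = \frac{1}{N^2}\sum_{l,m}\sum_i \mathrm{Cov}\big(\mathbf{1}[s_l^t=i],\,\mathbf{1}[s_m^t=i]\big).
\]
The diagonal terms $l=m$ contribute at most $\frac{1}{N^2}\,N\big(1-\sum_i x(i)^2\big)\le \frac1N$. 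For the off-diagonal terms I claim that each pair satisfies $\sum_i\mathrm{Cov}(\mathbf{1}[s_l^t=i],\mathbf{1}[s_m^t=i])\le (1-p_s^2)\,\pmeet(t)$, so that they contribute at most $(1-p_s^2)\pmeet(t)$ in total. Markov's inequality then yields, with probability at least $1-\delta$, $\|\hat{\pi}_N-x\|^2\le\frac1\delta\big[\frac1N+(1-p_s^2)\pmeet(t)\big]$, and combining with the Cauchy--Schwarz step produces the second term of~(\ref{eqn:mainresult}).

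The main obstacle is exactly the per-pair covariance bound, where the dependence induced by partial synchronization is felt. I would prove it by a coupling that runs two walkers independently for as long as they occupy distinct vertices, so that any correlation can arise only after a \emph{simultaneous} coincidence, an event of probability $\pmeet(t)$ by Definition~\ref{def:pmeet}. Conditioned on a meeting, I would invoke the erasure model of Definition~\ref{def:erasuremodel} to show that the excess probability of the two walks sharing a vertex at time $t$, over and above the independent value $\sum_i x(i)^2$, is at most $1-p_s^2$: when two frogs sit on a common master, the independent synchronization coins route them through a shared mirror---and thereby couple their next moves---only with residual probability $1-p_s^2$, so that $p_s=1$ restores full independence. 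The delicate points are constructing this coupling so that it preserves each walk's marginal law while enforcing independence off the meeting event, and checking that the contrast-type control for the non-reversible chain referenced in the introduction furnishes a usable bound on $\pmeet(t)$. Everything else---the reduction, the bias series, the second-moment computation, and the final Markov and Cauchy--Schwarz steps---is routine.
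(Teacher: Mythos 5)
Your proposal follows the same skeleton as the paper's proof for the stochastic half, but takes a genuinely different and more elementary route for the bias half. On the sampling side you reproduce the paper's argument almost exactly: $\hat{\pi}_N$ is unbiased for the single-walk law $\pi^t$, the set error is controlled by $\sqrt{k}\,\|\hat{\pi}_N-\pi^t\|_2$ via Cauchy--Schwarz, the second moment splits into a diagonal part $\le 1/N$ and $N(N-1)$ off-diagonal cross-covariances each bounded by $(1-p_s^2)\pmeet(t)$, and Markov's inequality converts this to the $\sqrt{k/\delta\,[\,1/N+(1-p_s^2)\pmeet(t)\,]}$ term. Your sketched coupling (run the two frogs independently until they simultaneously occupy a vertex \emph{and} at least one is forced through an erased edge, an event of probability at most $(1-p_s^2)\pmeet(t)$ by properties 2--3 of Definition~\ref{def:erasuremodel}) is precisely the paper's ``blocking walk'' and ``time of first interference'' construction, so the delicate point you flag is resolved exactly as you anticipate. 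Where you genuinely diverge is the mixing term: the paper bounds $\|\pi-\pi^t\|_1\le\sqrt{\chi^2(\pi^t;\pi)}$ and invokes the contrast bound for the non-reversible chain together with $|\lambda_2(Q)|\le 1-p_T$ to get $\sqrt{(1-p_T)^{t+1}/p_T}$, whereas you expand $\pi=p_T\sum_{\tau\ge0}(1-p_T)^\tau P^\tau u$ and observe that $\pi-\pi^t$ is a difference of two nonnegative vectors of mass $(1-p_T)^{t+1}$ each. Your bound $\|\pi-\pi^t\|_1\le 2(1-p_T)^{t+1}$ is both more elementary (no spectral or $\chi^2$ machinery) and asymptotically much tighter in $t$; the only cost is the side condition $4p_T(1-p_T)^{t+1}\le 1$ (your $p_T\le\tfrac14$) needed to fold it under the paper's stated first term, which is harmless for $p_T=0.15$.

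One blemish worth fixing: your symmetric reduction $\mu_k(\pi)-\mu_k(\hat{\pi}_N)\le 2\max_{|S|=k}|\pi(S)-\hat{\pi}_N(S)|$ puts a factor of $2$ in front of \emph{both} terms, and while the bias term absorbs it, the fluctuation term becomes $2\sqrt{k/\delta[\cdots]}$, which overshoots the constant in \eqref{eqn:mainresult}. The fix is the one-sided argument the paper uses: with $\hat{S}=\mathrm{argmax}_{|S|=k}\hat{\pi}_N(S)$ and $S^\star=\mathrm{argmax}_{|S|=k}\pi(S)$, the loss $\pi(S^\star)-\pi(\hat{S})$ equals $\delta(\hat{S}\setminus S^\star)-\delta(S^\star\setminus\hat{S})$ plus a nonpositive term, with $\delta=\hat{\pi}_N-\pi$ evaluated on two \emph{disjoint} sets, so the error is controlled by a single restricted $\ell_1$ norm rather than twice a maximum. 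With that adjustment (and carrying the bias and fluctuation contributions through this one-sided chain separately, as the paper does via its two lemmata), your argument delivers the stated bound.
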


\begin{remark}[Scaling]
\label{thm:scaling}
The result in Theorem~\ref{thm:main} immediately implies the following scaling for the number of iterations and frogs respectively. They both depend on the maximum captured mass possible, $\mu_k(\pi)$ and are sufficient for making the error, $\epsilon$, of the same order as $\mu_k(\pi)$.
\[
	t = O\left(\log{1 \over \mu_k(\pi)}\right), \qquad
	N = O\left({k \over \mu_k(\pi)^2}\right)
\]
\end{remark}

The proof of Theorem~\ref{thm:main} is deferred to Appendix~\ref{sec:proofmain}.
The guaranteed accuracy via this result also depends on the probability that two walkers will intersect. Via a simple argument, that probability is the same as the meeting probability for independent walks. The next theorem calculates this probability.

\begin{theorem}[Intersection Probability]
\label{thm:pmeet}

Consider \\
two independent random walks obeying the same ergodic transition probability matrix, $Q$ with invariant distribution $\pi$, as described in Definition~\ref{def:pagerank}. Furthermore, assume that both of them are initially distributed uniformly over the state space of size $n$. The probability that they meet within $t$ steps, is bounded as follows,
\[
	\pmeet(t) \leq {1 \over n } + {t \| \pi \|_\infty \over p_T},
\]
where $\| \pi \|_\infty$, denotes the maximal element of the vector $\pi$.
\end{theorem}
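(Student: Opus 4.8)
The plan is to reduce the meeting event to a sum of single-time collision probabilities via a union bound, and then to control each collision probability using the random-surfer series expansion of the PageRank vector.

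First I would bound $\pmeet(t)$ by a union bound over the $t+1$ time indices appearing in Definition~\ref{def:pmeet}:
\[
  \pmeet(t) \le \sum_{\tau=0}^{t} \mathbb{P}\!\left( s_{l_1}^{\tau} = s_{l_2}^{\tau} \right).
\]
Since both walks start from the uniform law and evolve independently under $Q$, they share the common marginal $x^\tau \triangleq Q^\tau u \in \simplexn$, where $u$ is the uniform distribution on $[n]$. Independence then factorizes each term into a collision probability,
\[
  \mathbb{P}\!\left( s_{l_1}^{\tau} = s_{l_2}^{\tau} \right) = \sum_{i=1}^{n} x^\tau(i)^2 \le \|x^\tau\|_\infty ,
\]
where the last step uses $\sum_i x^\tau(i)^2 \le \|x^\tau\|_\infty \sum_i x^\tau(i) = \|x^\tau\|_\infty$. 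At $\tau = 0$ the marginal is uniform and this term equals exactly $1/n$, which produces the first summand of the claimed bound. It therefore suffices to prove that $\|x^\tau\|_\infty \le \|\pi\|_\infty / p_T$ for every $\tau \ge 1$, since summing the remaining $t$ terms yields $t\|\pi\|_\infty/p_T$.

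Second, I would make the marginal explicit by unrolling the recursion $x^\tau = Q x^{\tau-1}$ with $Q$ from Definition~\ref{def:pagerank}. Since $u$ is a probability vector, a straightforward induction gives the finite-horizon random-surfer expansion and its stationary limit,
\[
  x^\tau = \sum_{g=0}^{\tau-1} p_T (1-p_T)^g P^g u + (1-p_T)^\tau P^\tau u , \qquad \pi = \sum_{g=0}^{\infty} p_T (1-p_T)^g P^g u .
\]
Every $P^g u$ is itself a probability vector, so all terms are nonnegative, and comparing these two expansions coordinatewise is what drives the per-step bound.

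The crux is this coordinatewise comparison, and it is where I expect the only real difficulty. Fix a coordinate $i$ and a step $\tau \ge 1$, and write $\alpha \triangleq \sum_{g=0}^{\tau-1} p_T(1-p_T)^g (P^g u)(i)$ and $L \triangleq (1-p_T)^\tau (P^\tau u)(i)$, so that $x^\tau(i) = \alpha + L$. The naive coordinatewise bounds $\alpha \le \pi(i)$ and $L \le \pi(i)/p_T$ (the latter because $p_T L$ is exactly the $g=\tau$ term of the series for $\pi$) only give $x^\tau(i) \le (1 + p_T^{-1})\pi(i)$, which is off by a factor $1+p_T$. The main obstacle is removing this slack, and the key observation is that these two naive bounds double count: the series for $\pi$ contains both the head $\alpha$ and the separate $g=\tau$ term $p_T L$, so in fact $\pi(i) \ge \alpha + p_T L$, i.e.\ $\alpha \le \pi(i) - p_T L$. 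Substituting,
\[
  x^\tau(i) = \alpha + L \le \pi(i) - p_T L + L = \pi(i) + (1-p_T)L \le \pi(i) + (1-p_T)\frac{\pi(i)}{p_T} = \frac{\pi(i)}{p_T},
\]
using $L \le \pi(i)/p_T$ at the last step. Maximizing over $i$ yields $\|x^\tau\|_\infty \le \|\pi\|_\infty/p_T$, which together with the first step gives $\pmeet(t) \le 1/n + t\|\pi\|_\infty/p_T$. The union bound and the expansion are routine; it is the careful accounting that avoids double counting the head term and the no-teleportation tail that makes the clean constant $\|\pi\|_\infty/p_T$ go through.
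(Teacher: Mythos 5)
Your proposal is correct, and its overall skeleton is identical to the paper's: a union bound over the $t+1$ collision times, the factorization $\mathbb{P}(s_{l_1}^\tau = s_{l_2}^\tau) = \|Q^\tau u\|_2^2 \le \|Q^\tau u\|_\infty$, the exact value $1/n$ at $\tau=0$, and the per-step bound $\|Q^\tau u\|_\infty \le \|\pi\|_\infty/p_T$ for $\tau \ge 1$. The only genuine divergence is in how that per-step bound is established. The paper observes that $\pi_i \ge p_T/n$ for all $i$, so $\pi = p_T u + (1-p_T)q$ for some distribution $q$ with nonnegative entries; applying $Q^\tau$ to both sides and using stationarity ($Q^\tau \pi = \pi$) together with nonnegativity of $Q^\tau$ and $q$ gives $\|\pi\|_\infty = \|Q^\tau \pi\|_\infty \ge p_T\|Q^\tau u\|_\infty$ in two lines. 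You instead unroll the random-surfer series for both $Q^\tau u$ and $\pi$ and compare them coordinatewise, where the observation that the head $\alpha$ and the tail term $p_T L$ are disjoint pieces of the series for $\pi(i)$ is exactly what recovers the clean constant $1/p_T$ rather than $1+1/p_T$. Both arguments are valid and yield the same inequality; the paper's is shorter and exploits stationarity directly, while yours makes explicit where the factor $1/p_T$ comes from (the geometric tail of no-teleportation steps) at the cost of more bookkeeping.
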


The proof is based on the observation that the $l_\infty$ norm of a distribution controls the probability that two independent samples coincide. We show that for all steps of the random walk, that norm is controlled by the $l_\infty$ norm of $\pi$. We defer the full proof to Appendix~\ref{sec:proofpmeet}.

A number of studies, give experimental evidence (e.g.\ \cite{becchetti2006distribution}) suggesting that PageRank values for the web graph follow a power-law distribution with parameter approximately $\theta=2.2$.  That is true for the tail of the distribution -- the largest values, hence of interest to us here -- regardless of the choice of $p_T$. The following proposition bounds the value of the heaviest PageRank value, $\|\pi\|_\infty$.

\begin{proposition}[Max of Power-Law Distribution]
\label{thm:powerlawmax}
Let $\pi \in \simplexn$ follow a power-law distribution with parameter $\theta$ and minimum value $p_T/n$.
Its maximum element, $\|\pi\|_\infty$, is at most $n^{-\gamma}$, with probability at least $1-cn^{\gamma-{1\over \theta-1}}$, for some universal constant $c$.
\end{proposition}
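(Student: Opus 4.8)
The plan is to treat the $n$ PageRank values as (essentially) i.i.d.\ draws from a Pareto law and to control the largest one by a first-moment (union-bound) argument. Writing the power-law tail as $\bar F(x) \triangleq \mathbb{P}(X > x) = (x_{\min}/x)^{\theta-1}$ for $x \ge x_{\min}$, with the prescribed minimum $x_{\min} = p_T/n$, the event $\{\|\pi\|_\infty > n^{-\gamma}\}$ is exactly the event that at least one of the $n$ coordinates exceeds the threshold $n^{-\gamma}$. So the whole statement reduces to a tail bound on a maximum of $n$ heavy-tailed variables evaluated at a single, carefully chosen level.

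Concretely, I would introduce the counting variable $N = \sum_{i=1}^n \mathbf{1}\{\pi_i > n^{-\gamma}\}$, so that $\{\|\pi\|_\infty > n^{-\gamma}\} = \{N \ge 1\}$, and apply Markov's inequality (equivalently, a union bound over the $n$ coordinates):
\[
	\mathbb{P}\big(\|\pi\|_\infty > n^{-\gamma}\big) = \mathbb{P}(N \ge 1) \le \mathbb{E}[N] = n\,\bar F(n^{-\gamma}).
\]
Substituting the tail and $x_{\min} = p_T/n$ gives $\bar F(n^{-\gamma}) = (p_T\, n^{\gamma-1})^{\theta-1}$, hence $\mathbb{E}[N] = p_T^{\theta-1}\, n^{\,1+(\gamma-1)(\theta-1)}$. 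Absorbing the factor $p_T^{\theta-1}$ into a universal constant $c$ leaves a bound of the form $c\,n^{E(\gamma,\theta)}$ with a negative exponent whenever $\gamma$ is below the relevant threshold, and taking complements yields $\mathbb{P}(\|\pi\|_\infty \le n^{-\gamma}) \ge 1 - c\,n^{E(\gamma,\theta)}$ of precisely the claimed shape.

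The main obstacle I anticipate is not the tail estimate itself — that is a one-line union bound — but rather reconciling two things so that the bare exponent $E(\gamma,\theta)$ becomes the stated $\gamma - \tfrac{1}{\theta-1}$. First, $\pi \in \simplexn$ means the coordinates satisfy $\sum_i \pi_i = 1$, so they are not genuinely independent and the minimum $p_T/n$ must be made consistent with this normalization; I would handle this either by working directly with the population (rank--frequency) form of the power law, where $\bar F$ is the deterministic fraction of coordinates above a level and the normalization pins down the proportionality constant, or by conditioning on $\sum_i \pi_i$ concentrating around its mean and checking that the induced dependence perturbs the exceedance probability only by a constant factor (harmless, since the claim already carries an unspecified universal $c$). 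Second, one must fix the convention relating the stated Zipf-type exponent $\tfrac{1}{\theta-1}$ to the density/tail exponent $\theta-1$; it is exactly this bookkeeping, together with the normalization constant, that converts the raw union-bound exponent into $\gamma - \tfrac{1}{\theta-1}$. Once the convention and the normalization constant are pinned down, the remainder is routine algebra of collecting powers of $n$.
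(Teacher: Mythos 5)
Your overall strategy (a first-moment bound) is in the same family as the paper's proof, but you apply the first moment to the wrong random variable, and the step you defer as ``routine algebra of collecting powers of $n$'' is exactly where the argument breaks. The union bound $\mathbb{P}(\|\pi\|_\infty > n^{-\gamma}) \le n\,\bar{F}(n^{-\gamma})$ with tail exponent $\theta-1$ and $x_{\min}=p_T/n$ yields a failure probability of order $n^{1+(\gamma-1)(\theta-1)}$. The coefficient of $\gamma$ in this exponent is $\theta-1$, whereas in the claimed bound $n^{\gamma-1/(\theta-1)}$ it is $1$; no choice of tail-exponent convention or normalization constant can change the coefficient of $\gamma$ (conventions and constants only shift the $\gamma$-independent part), so the reconciliation you postpone is not bookkeeping --- it cannot be carried out along this route except in the degenerate case $\theta=2$. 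Concretely, at the paper's working point $\theta=2.2$, $\gamma=0.5$, your exponent evaluates to $+0.4$, i.e.\ the bound is vacuous, while the proposition asserts an exponent of $-1/3$.

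The paper's proof is a two-line argument of a different shape: it cites \cite{newman2005power} for the bound $\mathbb{E}[\|\pi\|_\infty] = O(n^{-1/(\theta-1)})$ on the expected maximum of $n$ power-law draws, and then applies Markov's inequality \emph{to the maximum itself}, giving $\mathbb{P}(\|\pi\|_\infty > n^{-\gamma}) \le n^{\gamma}\,\mathbb{E}[\|\pi\|_\infty] = O(n^{\gamma-1/(\theta-1)})$. It is this Markov-on-the-maximum step that produces an exponent linear in $\gamma$ with unit coefficient, and it is the ingredient your proposal is missing: what you need is a bound on the \emph{expectation} of the maximum, not on the expected number of exceedances of a fixed level. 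To repair your write-up, replace the exceedance count $N$ with an estimate of $\mathbb{E}[\|\pi\|_\infty]$ (e.g.\ by integrating the tail of the maximum, or by invoking the cited result directly) and then apply Markov at the level $n^{-\gamma}$.
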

Assuming $\theta=2.2$ and picking, for example, $\gamma=0.5$, we get that 
\[
	\mathbb{P}(\| \pi \|_\infty > 1 / \sqrt{n}) \leq c n^{-1 / 3}.
\]
This implies that with probability at least $1-cn^{-{1/3}}$ the meeting probability is bounded as follows.
\[
	\pmeet(t) \leq {1 \over n } + {t  \over  p_T \sqrt{n}}.
\]
One would usually take a number of steps $t$ that are either constant or logarithmic with respect to the graph size $n$.
This implies that for many reasonable choices of set size $k$ and acceptable probability of failure $\delta$, the meeting probability vanishes as $n$ grows. Then we can make the 
 second term of the error in \eqref{eqn:mainresult} arbitrarily small by controlling the number of frogs, $N$.
The proof for Proposition~\ref{thm:powerlawmax} is deferred to Appendix~\ref{sec:proofpowerlawmax}.

\subsection{Prior Work}
\label{sec:prior_work}

There is a very large body of work on computing and approximating PageRank on different computation models (\textit{e.g.} see~\cite{berkhin2005survey,broder2006efficient,sarma2011estimating,das2013distributed,andersen2007local} and references therein). To the best of our knowledge, our work is the first to specifically design an approximation algorithm for high-PageRank nodes for graph engines.
Another line of work looks for {\em Personalized PageRank} (PPR) scores. This quantifies the influence an arbitrary node $i$ has on another node $j$, cf. recent work \cite{lofgren2014fast} and discussion therein. In \cite{avrachenkov_top_k_personal_2010}, the top-k approximation of PPR is studied. However, PPR is not applicable in our case, as we are looking for an answer close to a {\em global} optimum.

In \cite{avrachenkov_top_k_global_2007}, a random-walks-based algorithm  is proposed. The authors provide some insightful analysis of different variations of the algorithm. They show that starting a single walker from every node, is sufficient to achieve a good global approximation. We focus on capturing a few nodes with a lot of mass, hence we can get away with orderwise much fewer frogs than $O(n)$.  This is important for achieving low network traffic when the algorithm is executed on a distributed graph framework. Figure \ref{fig:plots532} shows linear reduction in network traffic when the number of initial walkers decreases. Furthermore, our method does not require waiting for the last frog to naturally expire (note that the geometric distribution has infinite support). We impose a very short time cut-off, $t$, and exactly analyze the  penalty in captured mass we pay for it in Theorem~\ref{thm:main}.

One natural question is how our algorithm compares to, or can be complemented by, graph sparsification techniques. One issue here is that graph sparsification crucially depends on the similarity metric used. Well-studied properties that are preserved by different sparsification methods involve lengths of shortest paths between vertices (such sparsifiers are called Spanners, see e.g. \cite{Peleg:1987}), cuts between subsets of vertices \cite{Benczur:1996} and more generally quadratic forms of the graph laplacian \cite{Spielman:2011,Batson:2013}, see \cite{Batson:2013} and references therein for a recent overview. To the best of our knowledge, there are no known graph sparsification techniques that preserve vertex PageRank.

One natural heuristic that one may consider is to independently flip a coin and delete each edge of the graph with some probability $r$. Note that this is crucially different from spectral sparsifiers \cite{Spielman:2011,Batson:2013} that choose these probabilities using a process that is already more complicated than estimating PageRank.
This simple heuristic of independently deleting edges indeed accelerates the estimation process for high-PageRank vertices. We compare $\rwa$ to this uniform sparsification process in Figure \ref{fig:plots530}.  We present here results for 2 iterations of the $\gla$ on the sparsified graph.
Note that running only one iteration is not interesting since it actually estimates only the in-degree of a node which is known in advance (i.e., just after the graph loading) in a graph engine framework. It can be seen in Figure \ref{fig:plots530} that even when only two iterations are used on the sparsified graph the running time is significantly worse compared to $\rwa$ and the accuracy is comparable.

Our base-line comparisons come from the graph framework papers since PageRank is a standard benchmark for running-time, network and other computations. Our implementation is on GraphLab (PowerGraph) 
and significantly outperforms the built-in PageRank algorithm. This algorithm is already shown in~\cite{gonzalez2012powergraph,navigating} to be significantly more efficient compared to other frameworks like Hadoop, Spark, Giraph \textit{etc.}

\section{Experiments}
\label{sec:experiments}

\begin{figure*}[!ht]
\centering
\includegraphics[width=1\columnwidth]{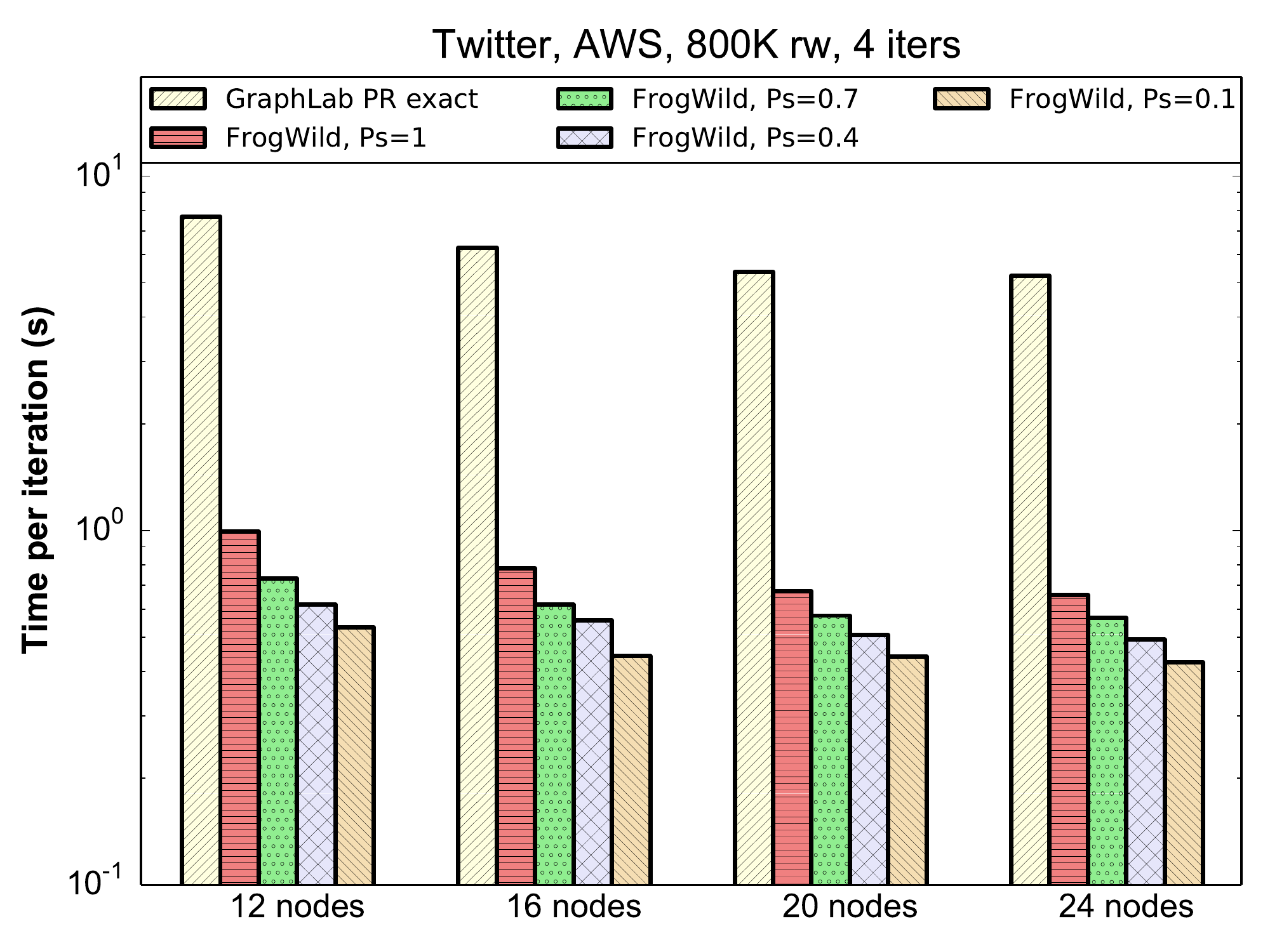}
\includegraphics[width=1\columnwidth]{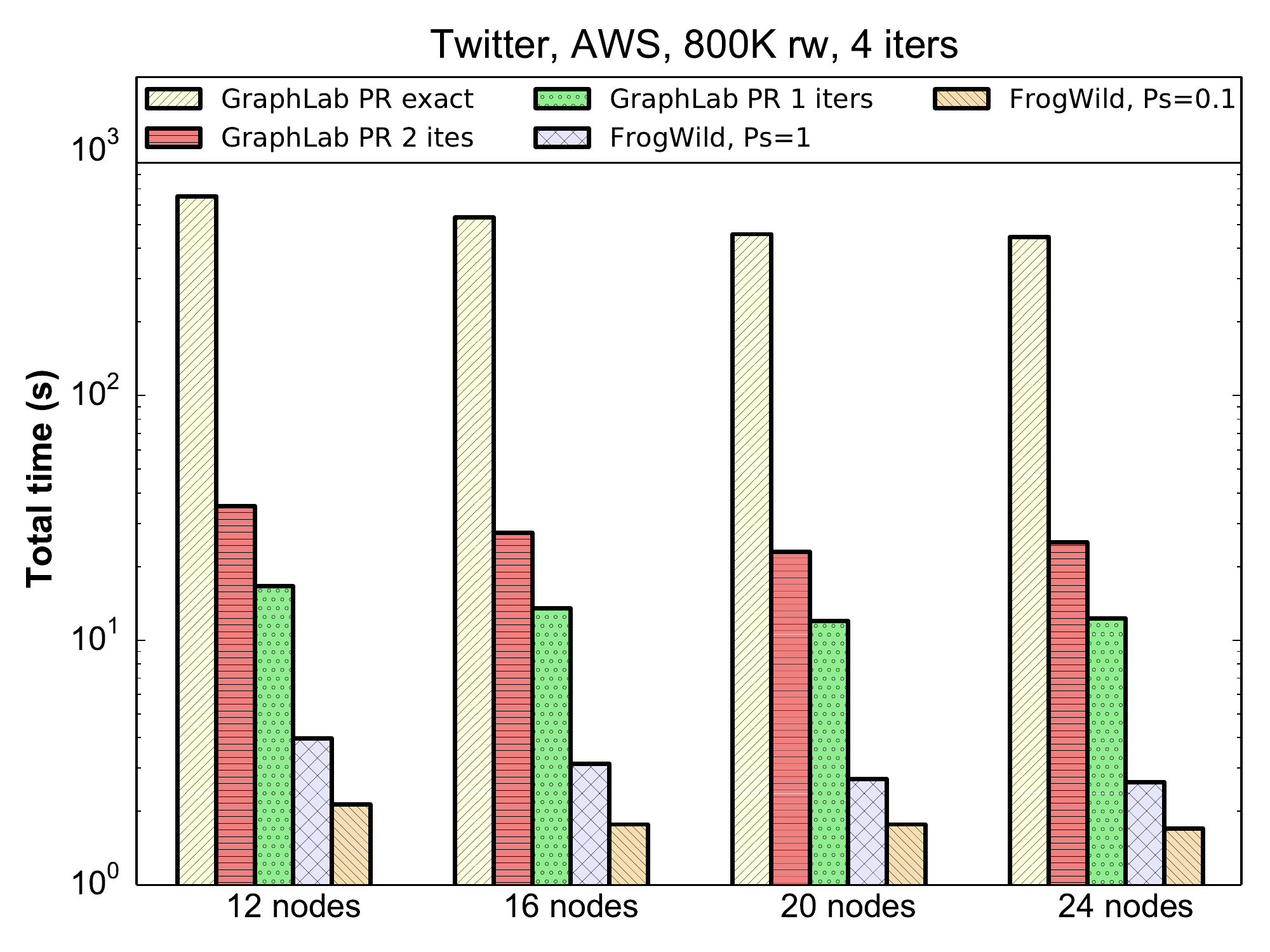}\\
{\small(a)\hspace{\columnwidth}(b)}\\\vspace{2mm}
\includegraphics[width=1\columnwidth]{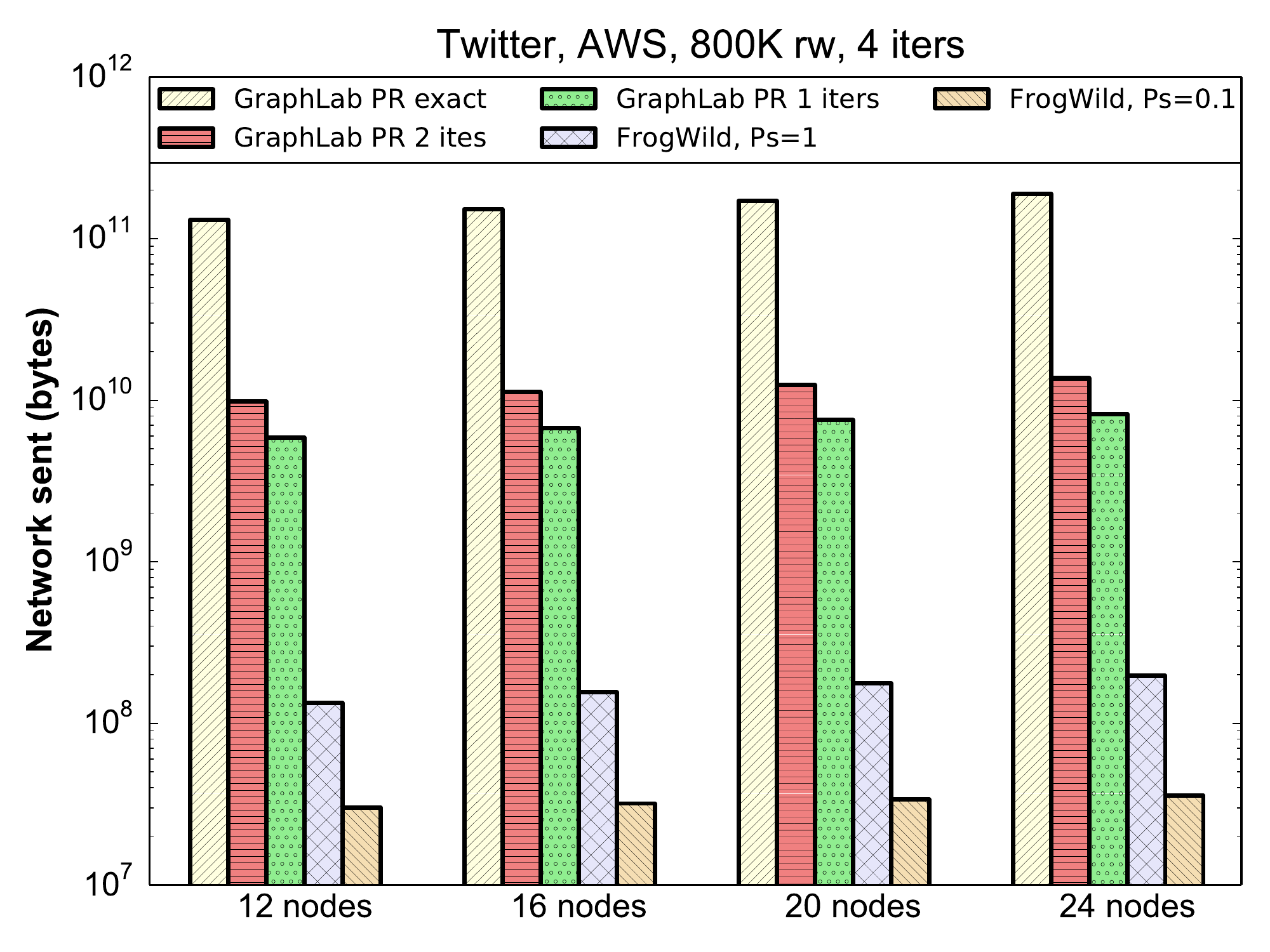}
\includegraphics[width=1\columnwidth]{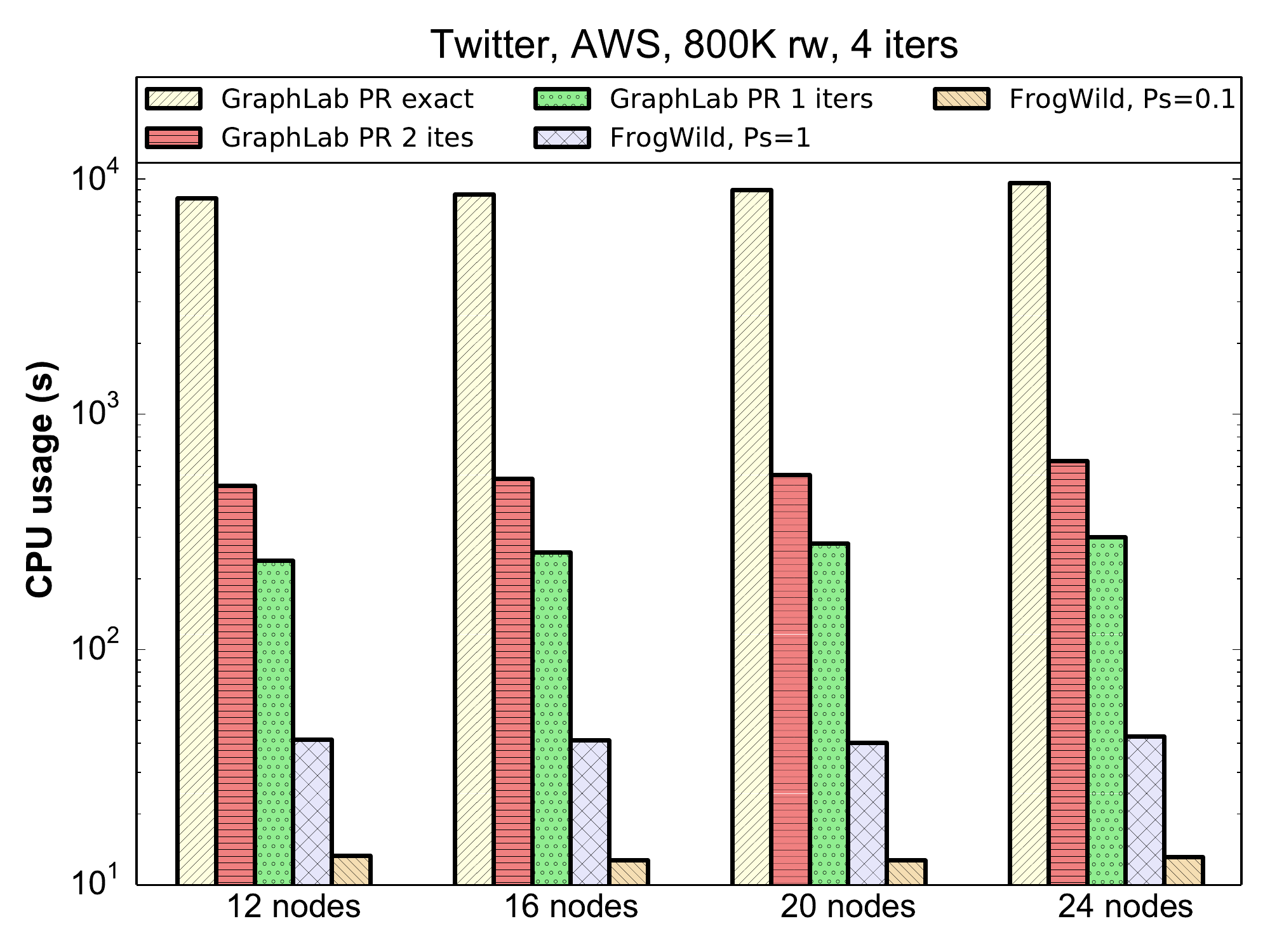}\\
{\small(c)\hspace{\columnwidth}(d)}
\caption{\small PageRank performance for various number of nodes. Graph: Twitter; system: AWS (Amazon Web Services); $\rwa$ parameters: 800K initial random walks and 4 iterations. 
(a) -- Running time per iteration.
(b) -- Total running time of the algorithms.
(c) -- Total network bytes sent by the algorithm during the execution (does not include ingress time).
(d) -- Total CPU usage time. Notice, this metric may be larger than the total running time since many CPUs run in parallel. 
}\label{fig:plots221}
\end{figure*}

\begin{figure*}[!ht]
\centering
\includegraphics[width=\columnwidth]{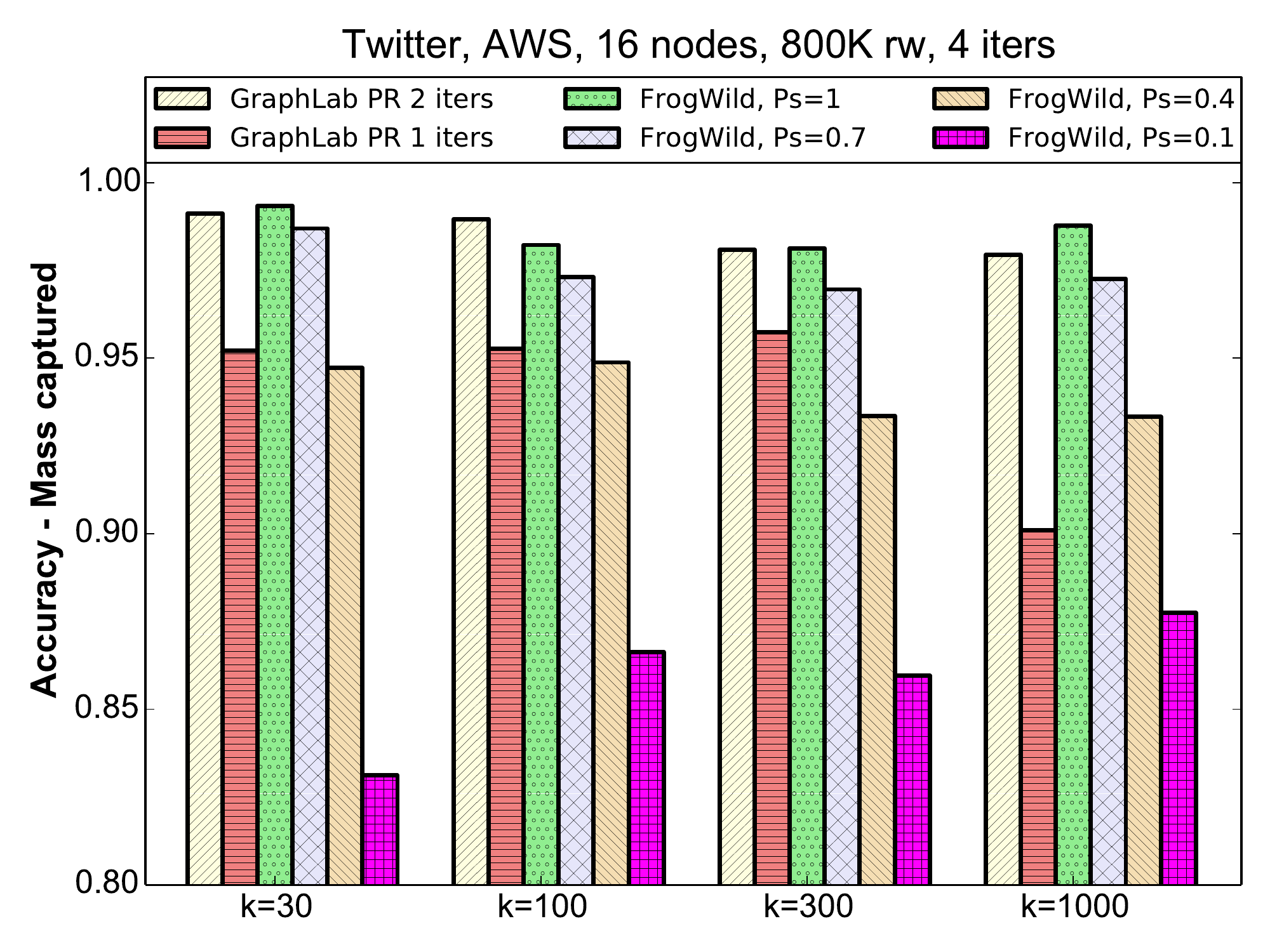}
\includegraphics[width=\columnwidth]{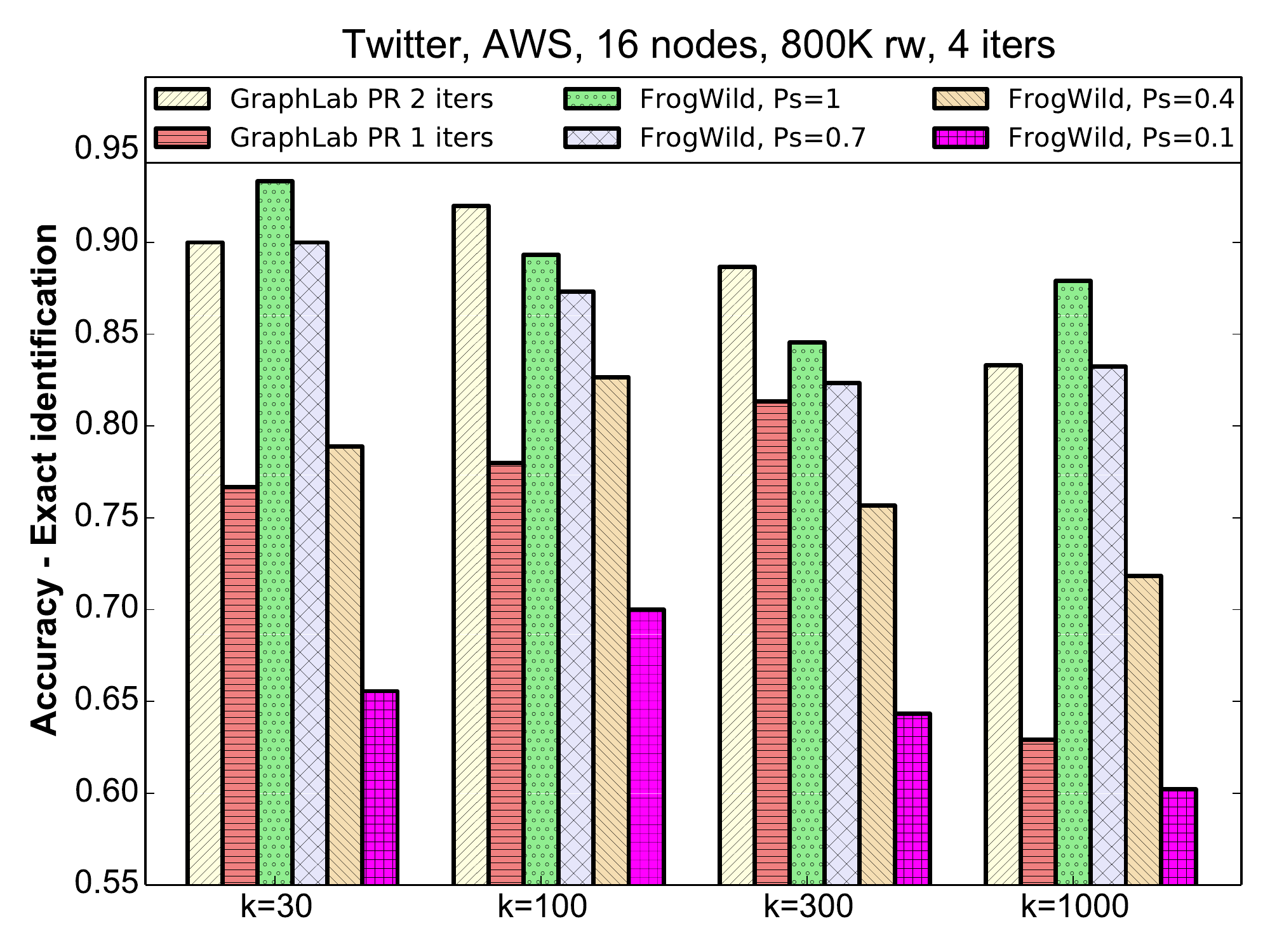}\\
{\small(a)\hspace{\columnwidth}(b)}\\
\caption{\small 
PageRank approximation accuracy for various number of top-$k$ PageRank vertices. Graph: Twitter; system: AWS (Amazon Web Services) with 16 nodes; $\rwa$ parameters: 800K initial random walks and 4 iterations.
(a) -- Mass captured. The total PageRank that the reported top-$k$ vertices worth in the exact ranking.
(b) -- Exact identification. The number of vertices in the intersection of the reported top-$k$ and the exact top-$k$ lists.
}\label{fig:plots222}
\end{figure*}

In this section we compare the performance of our algorithm to the PageRank algorithm shipped with GraphLab v2.2 (PowerGraph) \cite{graphlab}. The fact that GraphLab is the fastest distributed engine for PageRank is established experimentally in \cite{navigating}.  We focus on two algorithms: the basic built-in algorithm provided as part of the GraphLab \emph{graph analytics toolkit}, referred to here as $\gla$, and $\rwa$.
Since we are looking for a top-$k$ approximation and $\gla$ is meant to find the entire PageRank vector, we only run it for a small number of iterations (usually $2$ are sufficient). This gives us a good top-$k$ approximation and is much faster than running the algorithm until convergence. We also fine tune the algorithm's tolerance parameter to get a good but fast approximation.

We compare several performance metrics, namely: running time, network usage, and accuracy. The metrics do not include time and network usage required for loading the graph into  GraphLab (known as the \emph{ingress time}). They reflect only the execution stage.
 
\subsection{The Systems}
 
We perform experiments on two systems. 
The first system is a cluster of $20$ virtual machines, created using VirtualBox 4.3 \cite{virtbox} on a single physical server. The server is based on an Intel\textsuperscript{\textregistered} Xeon\textsuperscript{\textregistered} CPU E5-1620 with 4 cores at 3.6 GHz, and 16 GB of RAM. 
The second system, comprises of a cluster of up to 24 EC2 machines on AWS (Amazon web services) \cite{aws}. We use m3.xlarge instances, based on Intel\textsuperscript{\textregistered} Xeon\textsuperscript{\textregistered} CPU E5-2670 with  4 vCPUs and 15 GB RAM.

\subsection{The Data}
 
For the VirtualBox system, we use the LiveJournal graph \cite{snapnets} with 4.8M vertices and 69M edges. For the AWS system, in addition to the LiveJournal graph, we use the Twitter graph \cite{Kwak10www} which has 41.6M nodes and 1.4B edges.

\subsection{Implementation}
$\rwa$ is implemented on the standard GAS (gather, apply, scatter) model. We implement  \texttt{init()}, \texttt{apply()}, and \texttt{scatter()}. The purpose of  \texttt{init()} is to collect the random walks sent to the node by its neighbors using \texttt{scatter()} in the previous iteration. In the first iteration, \texttt{init()} generates a random fraction of the initial total number of walkers. This implies that the initial walker locations are randomly distributed across nodes.
$\rwa$ requires the length of random walks to be geometrically distributed (see Section \ref{sec:algo}). For the sake of efficiency, we impose an upper bound on the length of random walks. The algorithm is executed for the constant number of iterations (experiments show good results with even 3 iterations) after which all the random walks are stopped simultaneously.
The \texttt{apply()} function is responsible for keeping track of the number of walkers that have stopped on each vertex and \texttt{scatter()} distributes the walkers still alive to the neighbors of the vertex. 
The \texttt{scatter()} phase is the most challenging part of the implementation. In order to reduce information exchange between machines, we use a couple of ideas.

First, we notice that random walks do not have identity. Hence, random walks destined for the same neighbor can be combined into a single message. The second optimization and significant part of our work is modifying the GraphLab engine. The recent versions of GraphLab (since PowerGraph) partition the graph by splitting vertices. As a consequence, the engine will need to synchronize all the mirrors of a vertex over the network a number of times during each GAS cycle.

When running a few random walks, only a handful of neighbors end up receiving walkers. For this reason, synchronizing all mirrors can be very wasteful. We deal with that by implementing \emph{randomized synchronization}. We expose parameter $p_s \in [0,1]$ to the user as a small extension to the GraphLab API. It describes the fraction of replicas that will be synchronized. Replicas not synchronized remain idle for the upcoming scatter phase. The above  edits in the engine are only a matter of a few (about 10) lines of code. 
Note that the $p_s$ parameter is completely optional, i.e., setting it to 1 will result in the original engine operation. Hence, other analytic workloads will not be affected. However, any random walk or ``gossip'' style algorithm (that sends a single messages to a random subset of its neighbors) can benefit by exploiting $p_s$.
Our modification of the GraphLab engine as well as the \rwa\  vertex program  can be found in \cite{borokhovich2014frogwildcode}.

\subsection{Results}

$\rwa$ is significantly faster and uses less network and CPU compared to $\gla$. Let us start with the Twitter graph and the AWS system. In Figure~\ref{fig:plots221}(a) we see that, while  $\gla$ takes about $7.5$ seconds per iteration (for 12 nodes),  $\rwa$ takes less than $1$ sec, achieving more than a $7x$ speedup. Reducing the value of $p_s$   decreases the running time. We see a similar picture when we study the total running time of the algorithms in Figure \ref{fig:plots221}(b)).

We plot network performance in Figure~\ref{fig:plots221}(c). We get a $1000x$ improvement compared to the exact $\gla$, and more than $10x$ with respect to doing $1$ or $2$ iterations of $\gla$. 
In Figure~\ref{fig:plots221}(d) we can see that the total CPU usage reported by the engine is also much lower for $\rwa$.

We now turn to compare the approximation metrics for the PageRank algorithm. For various $k$, we check the two accuracy metrics: Mass captured (Figure \ref{fig:plots222}(a)) and the Exact identification (Figure \ref{fig:plots222}(b)).  
Mass captured -- is the total PageRank that the reported top-$k$ vertices worth in the exact ranking. Exact identification -- is the number of vertices in the intersection of the reported top-$k$ and the exact top-$k$ lists. We can see that the approximation achieved by the $\rwa$ for $p_s=1$ and $p_s=0.7$ always outperforms the $\gla$ with 1 iteration. The approximation achieved by the $\rwa$ with $p_s=0.4$ is relatively good for the both metrics, and with $p_s=0.1$ is reasonable for the \emph{Mass captured} metrics. 

\begin{figure*}[!t]
\centering
\includegraphics[width=1\columnwidth]{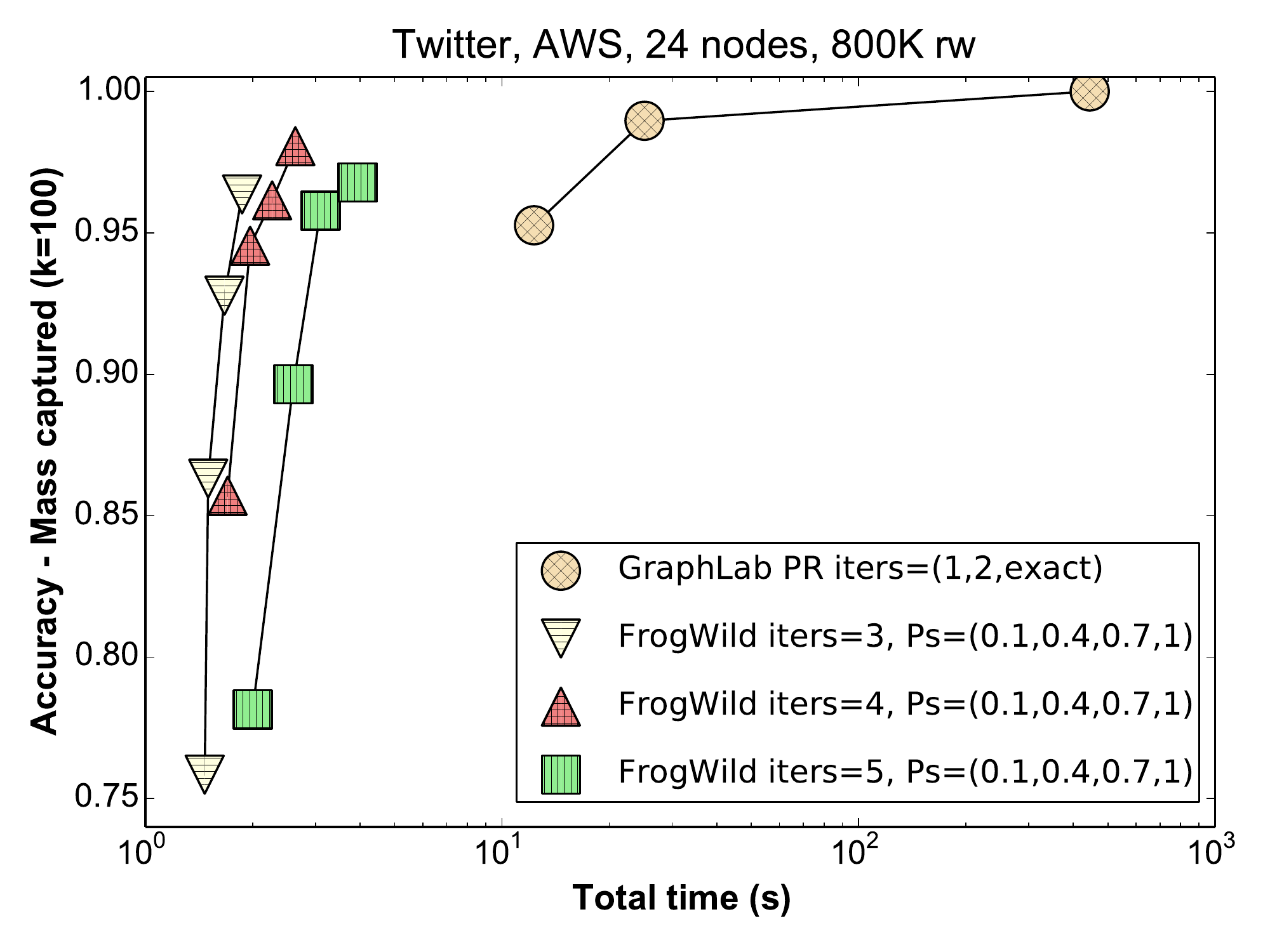}
\includegraphics[width=1\columnwidth]{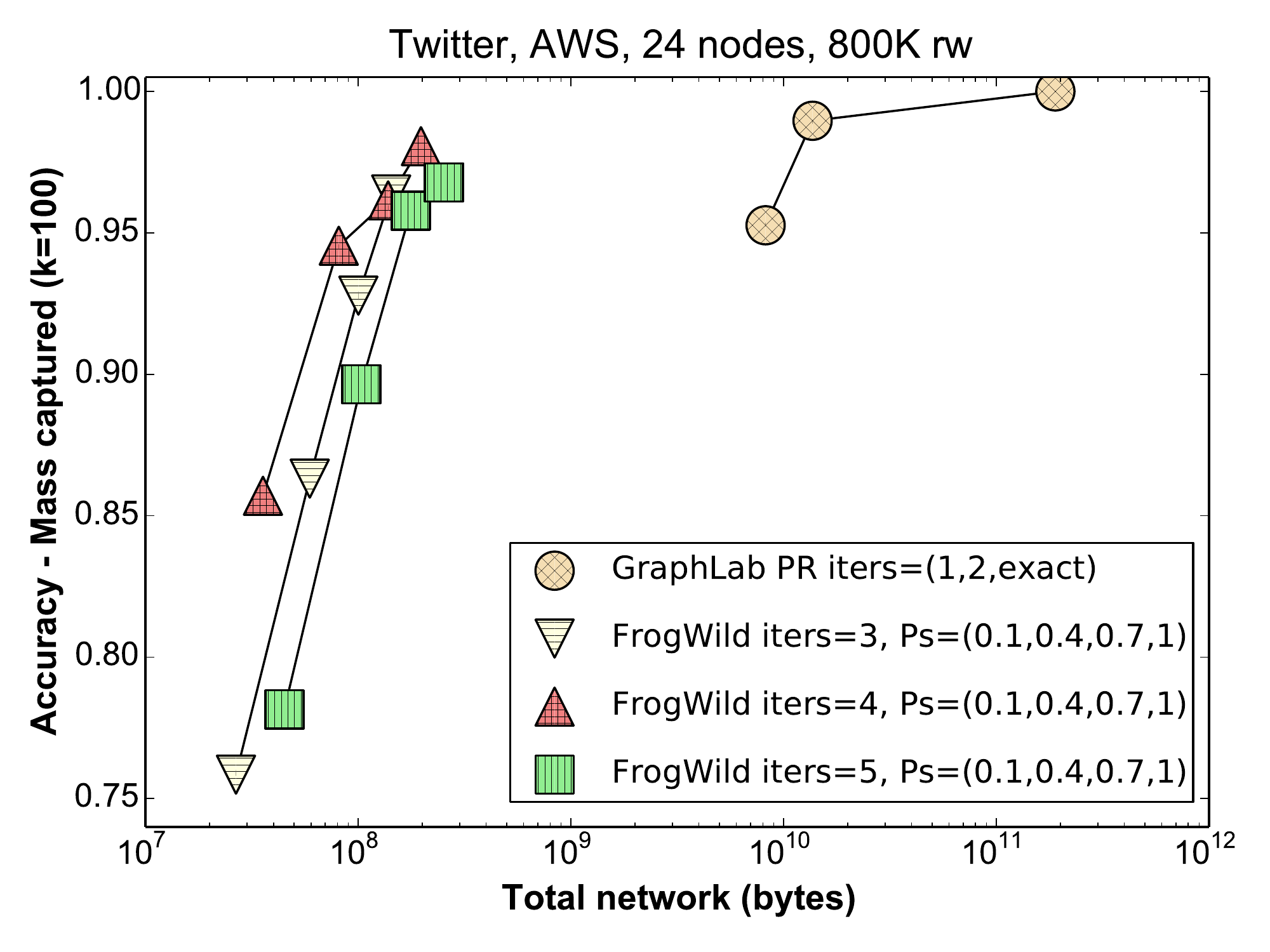}\\
{\small(a)\hspace{\columnwidth}(b)}
\caption{\small 
PageRank approximation accuracy with the ``Mass captured'' metric for top-100 vertices. Graph: Twitter; system: AWS (Amazon Web Services) with 24 nodes; $\rwa$ parameters: 800K initial random walks.
(a) - Accuracy versus total running time. 
(b) - Accuracy versus total network bytes sent.
}\label{fig:plots223}
\end{figure*}

In Figure \ref{fig:plots223} we can see the tradeoff between the accuracy, total running time, and the network usage. The performance of $\rwa$ is evaluated for various number of iterations and the values of $p_s$. The results show that with the accuracy comparable to  $\gla$,  $\rwa$ has much less running time and network usage. 
Figure \ref{fig:plots224} illustrates how much network traffic we save using $\rwa$. The area of each circle is proportional to the number of bytes sent by each algorithm.

\begin{figure}[!h]
\centering
\includegraphics[width=1\columnwidth]{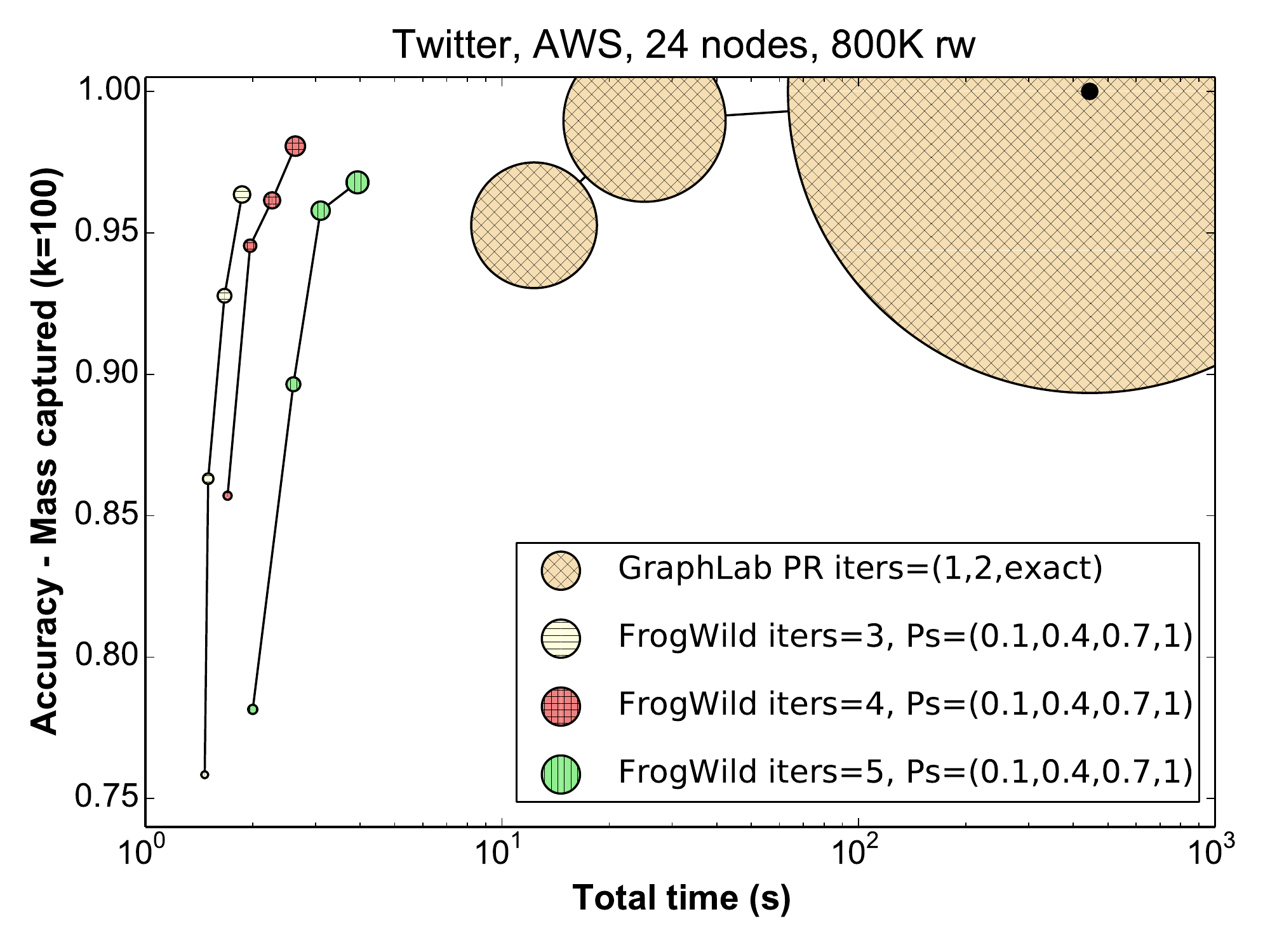}
\caption{\small 
Accuracy versus total running time. Graph: Twitter; system: AWS (Amazon Web Services) with 24 nodes; $\rwa$ parameters: 800K initial random walks. The area of each circle is proportional to the total network bytes sent by the specific algorithm. 
}\label{fig:plots224}
\end{figure}

\begin{figure}[!t]
\centering
\includegraphics[width=1\columnwidth]{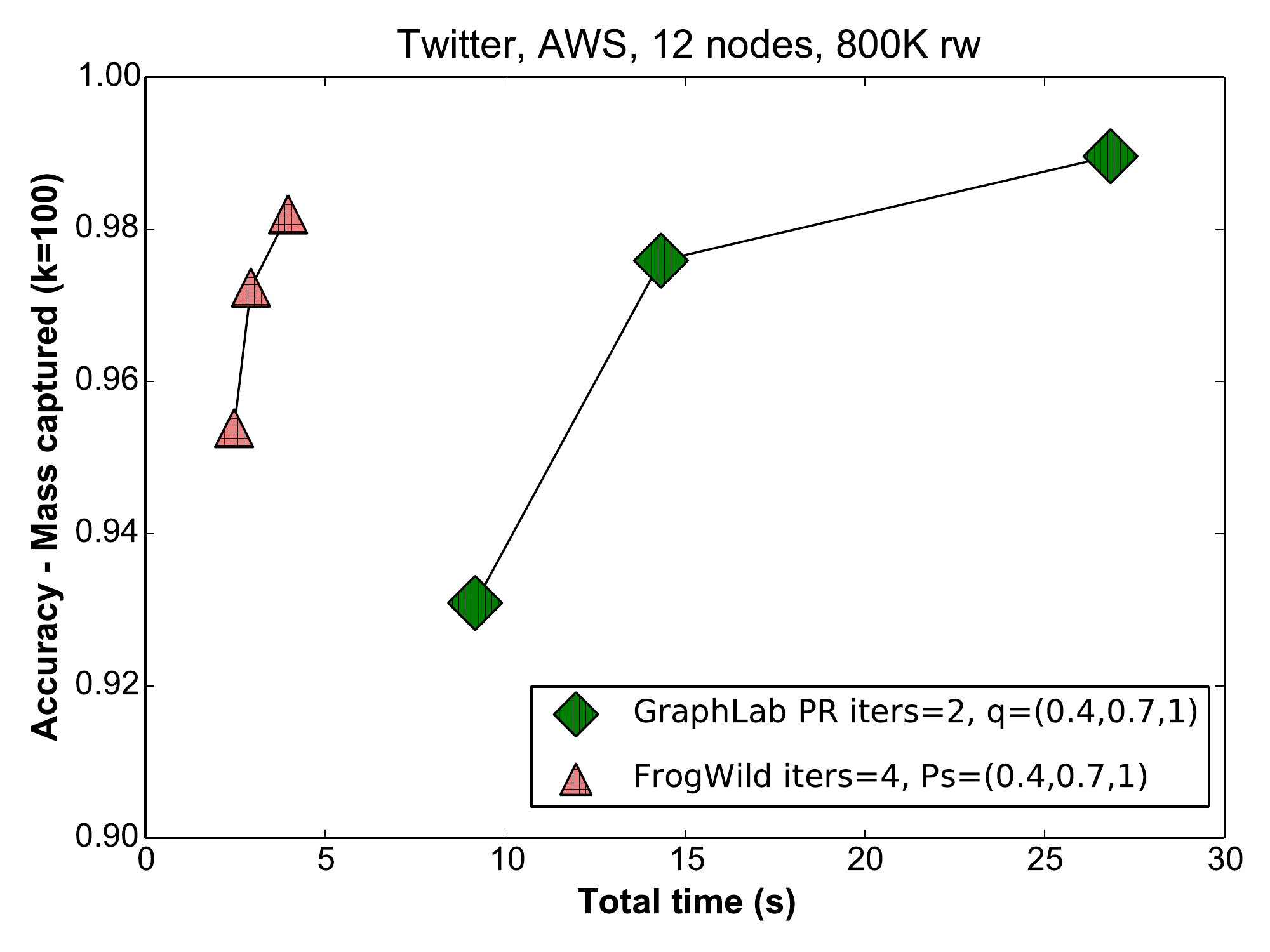}
\caption{\small 
Accuracy versus total running time. Graph: Twitter; system: AWS (Amazon Web Services) with 12 nodes; $\rwa$ parameters: 800K initial random walks. $q=1-r$ is the probability of \emph{keeping} an edge in the sparsification process.
}\label{fig:plots530}
\end{figure}

We also compare $\rwa$ to an approximation strategy that uses a simple sparsification technique described in Section~\ref{sec:prior_work}. First, the graph is sparsified by deleting each edge with probability $r$, then $\gla$ is executed.
In Figure \ref{fig:plots530}, we can see that $\rwa$ outperforms this approach in terms running time while achieving comparable accuracy.

\begin{figure*}[!hp]
\centering
\includegraphics[width=1\columnwidth]{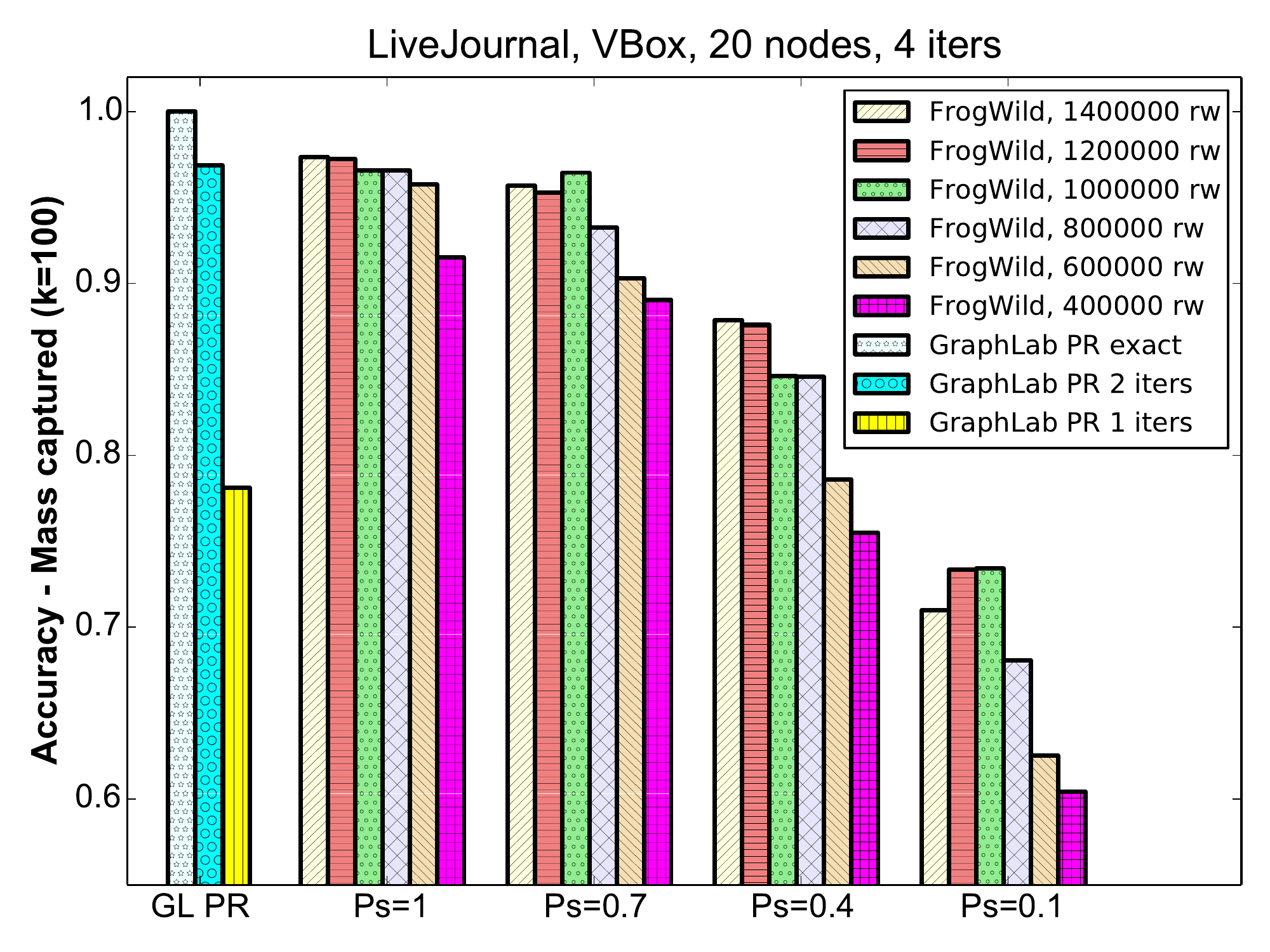}
\includegraphics[width=1\columnwidth]{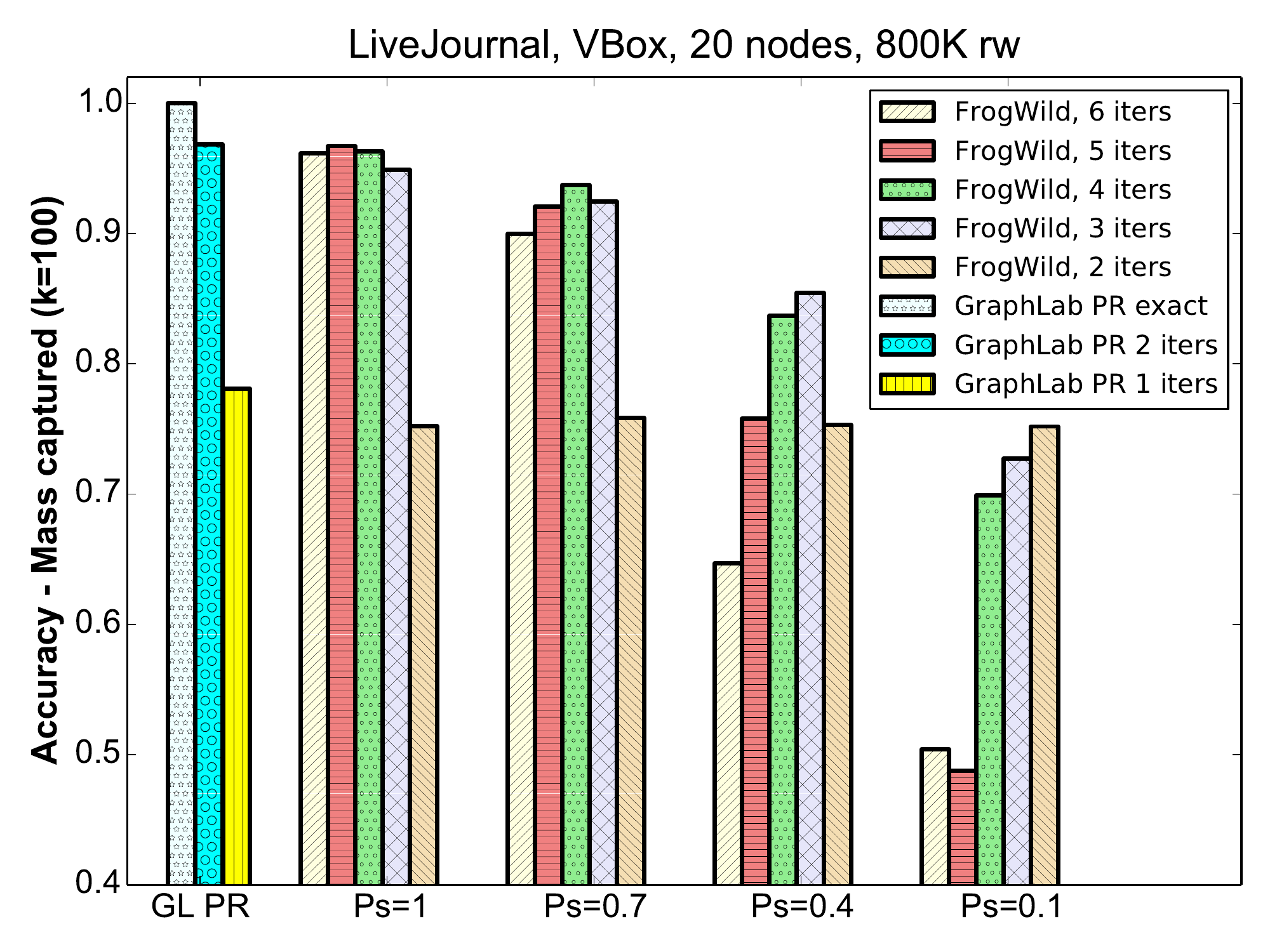}\\
{\small(a)\hspace{\columnwidth}(b)}\\\vspace{2mm}
\includegraphics[width=1\columnwidth]{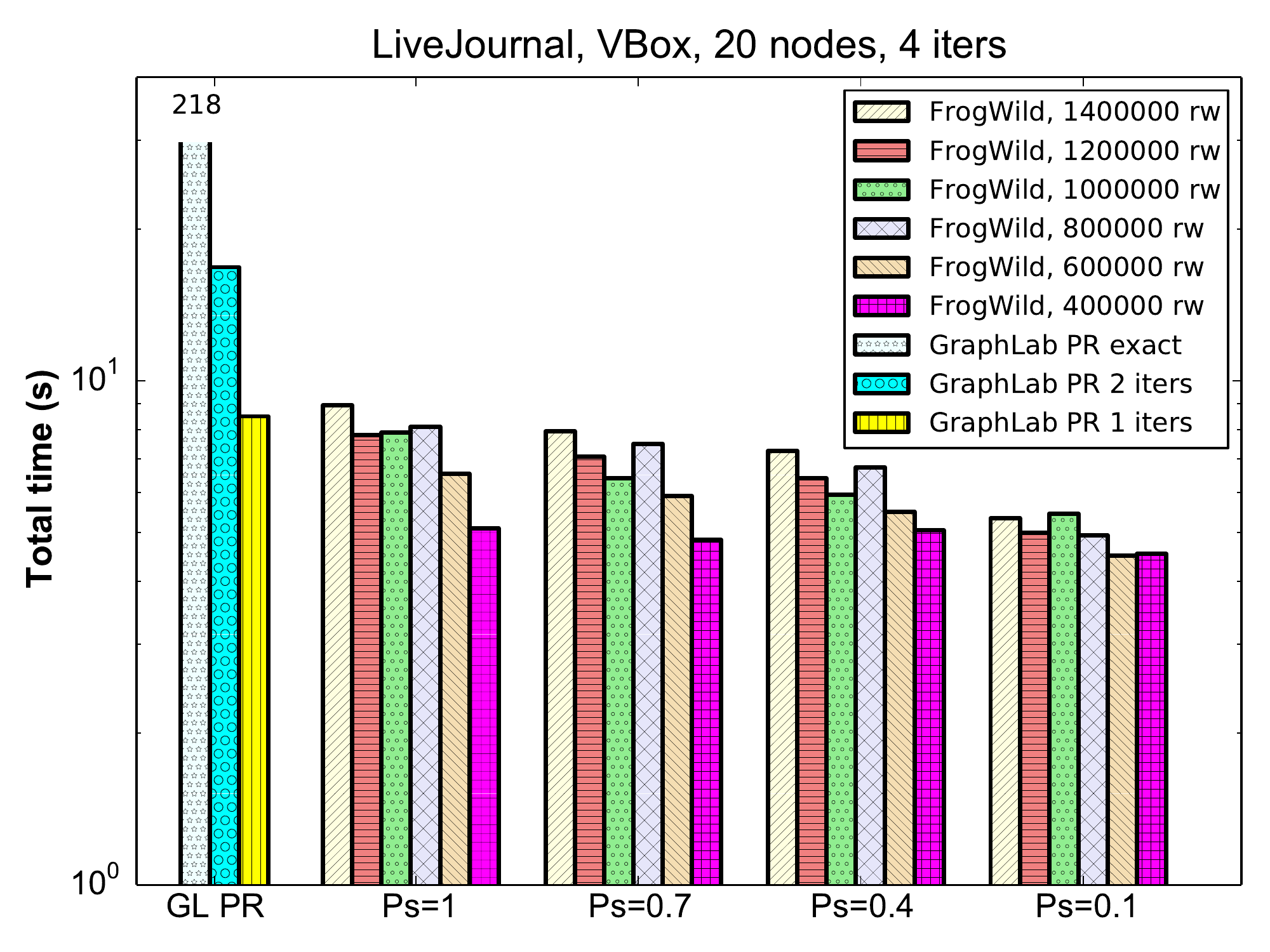}
\includegraphics[width=1\columnwidth]{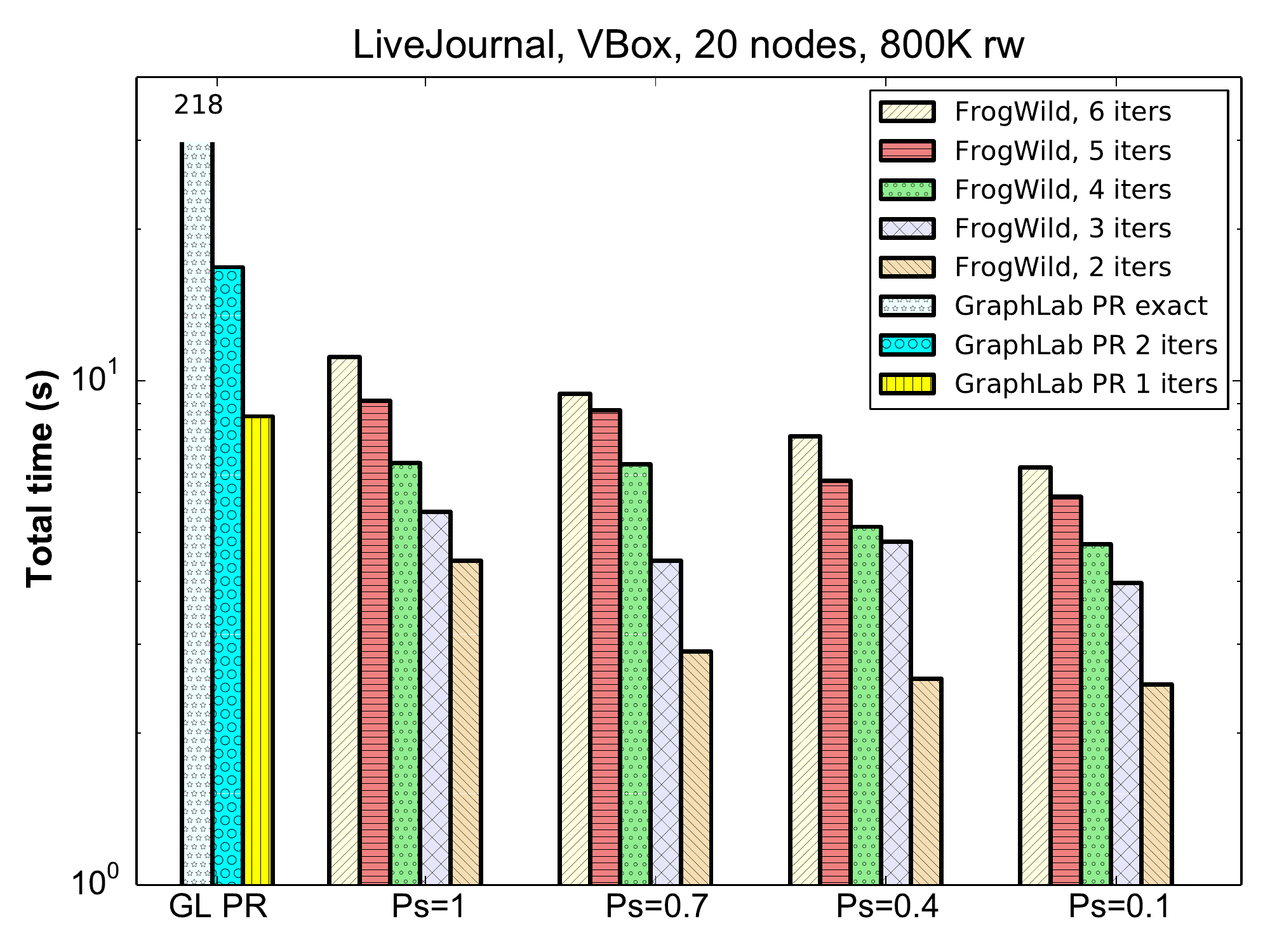}\\
{\small(c)\hspace{\columnwidth}(d)}
\vspace{-2mm}
\caption{\small 
Graph: LiveJournal; system: VirtualBox with 20 nodes.
(a) -- Accuracy for various number of initial random walks in the $\rwa$ (with 4 iterations).
(b) -- Accuracy for various number of iterations of $\rwa$ (with 800K initial random walks).
(c) -- Total running time for various number of initial random walks in the $\rwa$ (with 4 iterations).
(d) -- Total running time for various number of iterations of $\rwa$ (with 800K initial random walks).
}\label{fig:plots225}
\end{figure*}

Finally, we plot results for the LiveJournal graph on the VirtualBox system. Figures~\ref{fig:plots225}(a,b) show the effect of the number of walkers, $N$, and the number of iterations for $\rwa$ on the achieved accuracy. Good accuracy and running time (see Figure \ref{fig:plots225}(c,d)) are achieved for $800K$ initial random walks and $4$ iterations of $\rwa$. Similar to the Twitter graph, also for the LiveJournal graph we can see, in Figure~\ref{fig:plots227}, that our algorithm is faster and uses much less network, while still maintaining good PageRank accuracy. By varying the number of initial random walks and the number of iterations we can fine-tune the $\rwa$ for the optimal accuracy-speed tradeoff.

Interestingly, for both graphs (Twitter and LiveJournal), reasonable parameters are: 800K initial random walks and 4 iterations, despite the order of magnitude difference in the graph sizes. This implies slow growth for the necessary number of frogs with respect to the size of the graph. This scaling behavior is tough to check in practice, but it is explained by our analysis. Specifically, Remark~\ref{thm:scaling} shows that the number of frogs should scale as
$N = O\left({k \over \mu_k(\pi)^2}\right)$. 

\begin{figure*}[!bp]
\centering
\includegraphics[width=1\columnwidth]{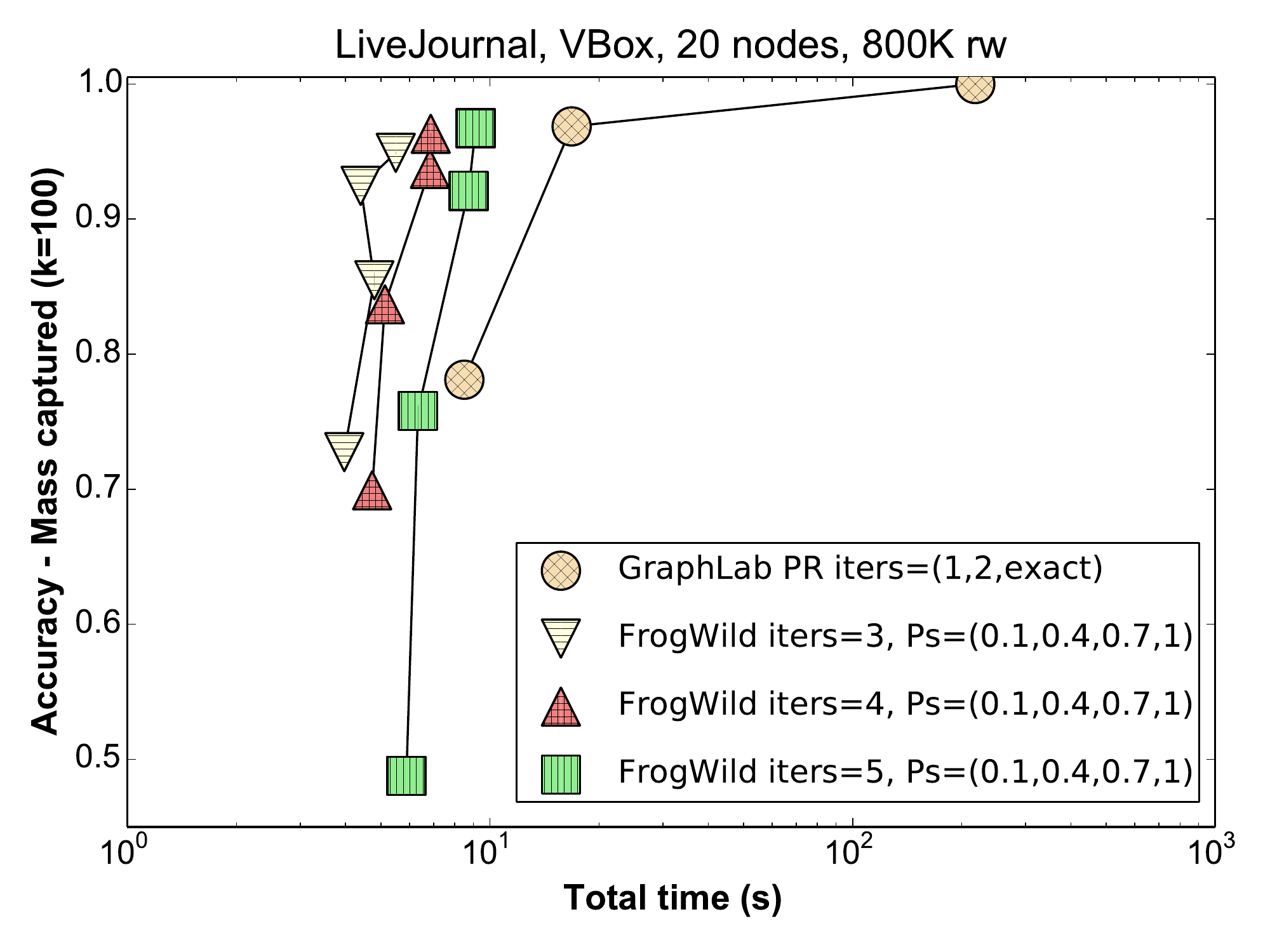}
\includegraphics[width=1\columnwidth]{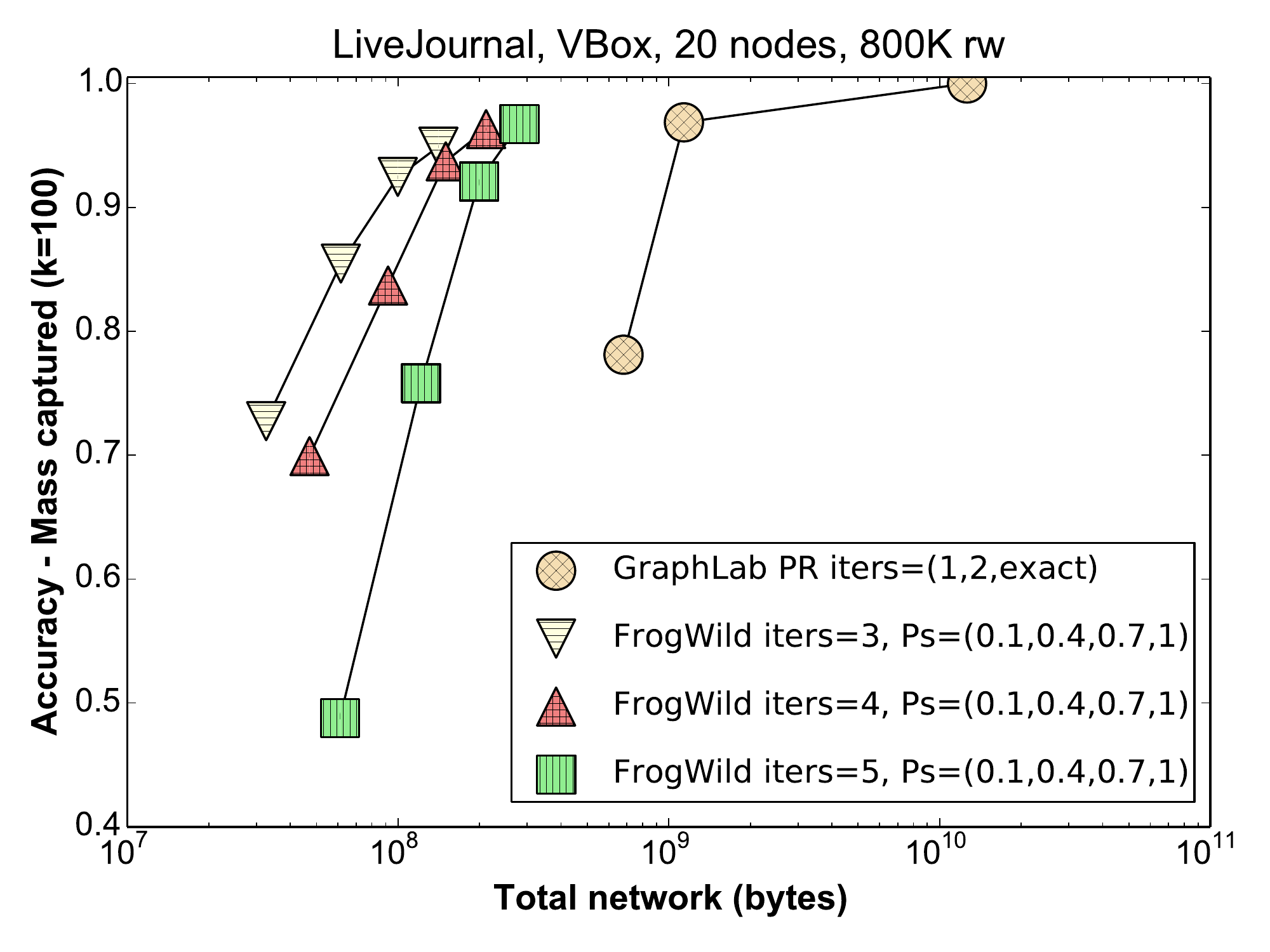}\\
{\small(a)\hspace{\columnwidth}(b)}
\vspace{-2mm}
\caption{\small 
Graph: LiveJournal; system: VirtualBox with 20 nodes; $\rwa$ parameters: 800K initial random walks. 
(a) -- Accuracy versus total running time.
(b) -- Accuracy versus total network bytes sent.
}\label{fig:plots227}
\end{figure*}

\begin{figure}[!t]
\centering
\includegraphics[width=1\columnwidth]{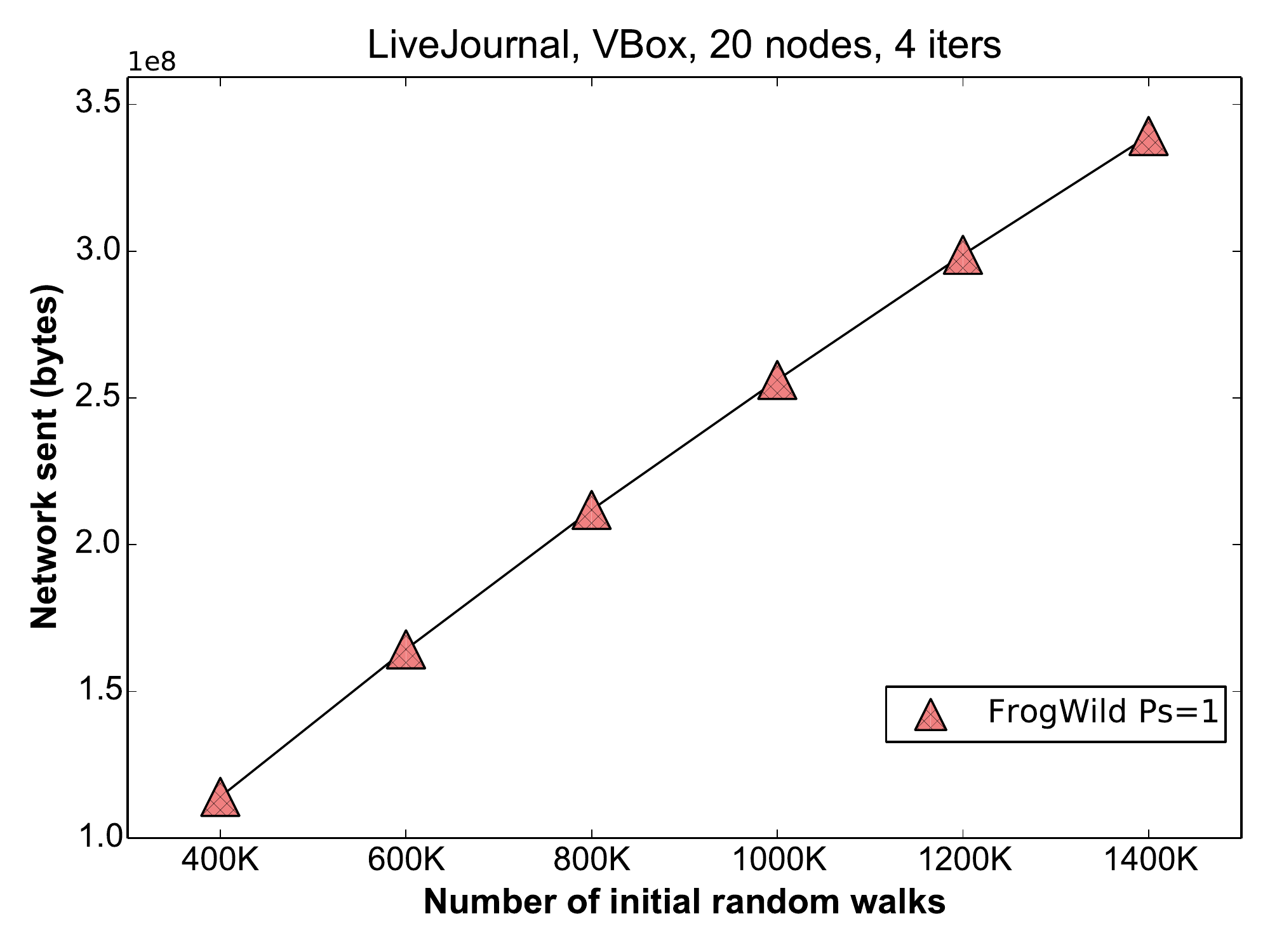}
\caption{\small 
Network usage of $\rwa$ versus the number of initial random walks. 
Graph: LiveJournal; system: VirtualBox with 20 nodes; $\rwa$ parameters: 4 iterations.
}\label{fig:plots532}
\end{figure}

\bibliographystyle{abbrv}

\begin{appendix}

\section{Edge Erasure Model}
\label{sec:erasuremodel}

\begin{definition}[Edge Erasure Model]
\label{def:erasuremodel}
An edge erasure model is a process that is independent from the random walks (up to time $t$) and  temporarily erases a subset of all edges at time $t$. The event $E_{i,j}^t$ represents the erasure of edge $(i,j)$ from the graph for time $t$. The edge is not permanently removed from the graph, it is just disabled and considered again in the next step. The edge erasure models we study satisfy the following properties.
\begin{enumerate}
\item Edges are erased independently for different vertices,
\[	
	\mathbb{P}(E_{i,j}^t, E_{i,k}^t )	 = \mathbb{P}(E_{i,j}^t )\mathbb{P}(E_{i,k}^t )
\]
and across time,
\[	
	\mathbb{P}(E_{i,j}^t, E_{i,j}^s )	 = \mathbb{P}(E_{i,j}^t )\mathbb{P}(E_{i,j}^s ).
\]
\item Each outgoing edge is preserved (not erased) with probability at least $p_s$.
\[	
	\mathbb{P}(\overline{E_{i,j}^t}) \geq p_s
\]
\item Erasures do not exhibit significant negative correlation. Specifically,
\[	
	\mathbb{P}(\overline{E_{i,j}^t}|\overline{E_{i,k}^t}) \geq p_s.
\]
\item Erasures in a neighbourhood are symmetric. Any subset of out-going edges of vertex $i$, will be erased with exactly the same probability as another subset of the same cardinality.
\end{enumerate}
\end{definition}

The main two edge erasure models we consider are described here. They both satisfy all required properties. Our theory holds for both\footnote{ {\em Independent Erasures} can lose some walkers, when it temporarily leads to some nodes having zero out-degree.}, but in our implementation and experiments we use "At Least One Out-Edge Per Node."

\begin{example}[Independent Erasures]
Every edge is preserved independently with probability $p_s$.
\end{example}

\begin{example}[At Least One Out-Edge Per Node]
This edge erasure model, decides all erasures for node $i$ independently, like {\em Independent Erasures}, but if all out-going edges for node $i$ are erased, it draws and enables one of them uniformly at random.
\end{example}

\section{Theorem Proofs}

\subsection{Proof of Theorem~\ref{thm:main}}
\label{sec:proofmain}

\ifx\omitproofs\undefined
In this section we provide a complete proof of our main results. We start from simple processes and slowly introduce the analytical intricacies of our system one-by-one giving guarantees on the performance of each stage.
\else
In this section we provide a proof sketch for our main results. We start from simple processes and slowly introduce the analytical intricacies of our system one-by-one giving guarantees on the performance of each stage. Detailed proofs for the lemmata are deferred to the long version of this paper \cite{mitliagkas2014frogwildlong}.
\fi

\begin{process}[Fixed Step]
\label{proc:fixedstep}
Independent walkers start on nodes selected uniformly at random and perform random walks on the {\em augmented} graph. This means that teleportation happens with probability $p_T$ and the walk is described by the transition probability matrix (TPM) $Q$, as defined in Section~\ref{sec:problemformulation}. Each walker performs exactly $t$ steps before yielding a sample. The number of walkers tends to infinity.
\end{process}

Before we talk about the convergence properties of this Markov chain, we need some definitions.

\begin{definition}
\label{def:chicontrast}
The $\chi^2$-contrast $\chi^2(\alpha;\beta)$ of $\alpha$ with respect to $\beta$ is defined by
\[
	\chi^2(\alpha;\beta) = \sum_i { (\alpha(i) - \beta(i) )^2 \over \beta(i)  }.
\]
\end{definition}

\begin{lemma}
\label{thm:chiub}
Let $\pi \in \Delta^{n-1}$ a distribution satisfying \\ 
\[
\min_i \pi(i) \geq {c \over n}
\]
for constant $c\leq 1$, and let $u \in \Delta^{n-1}$ denote the uniform distribution. Then, $
	\chi^2(u;\pi) \leq \left( { 1 - c \over c } \right)$.
\end{lemma}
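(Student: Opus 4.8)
The plan is to reduce the definition of the $\chi^2$-contrast to a single sum of reciprocals of $\pi$, and then apply the lower bound on the entries of $\pi$ termwise. First I would record the elementary algebraic identity that holds whenever $\alpha,\beta \in \Delta^{n-1}$ are both probability distributions:
\[
	\chi^2(\alpha;\beta) = \sum_i \frac{(\alpha(i)-\beta(i))^2}{\beta(i)} = \sum_i \frac{\alpha(i)^2}{\beta(i)} - 2\sum_i \alpha(i) + \sum_i \beta(i) = \sum_i \frac{\alpha(i)^2}{\beta(i)} - 1,
\]
where the cross term simplifies because $\sum_i \alpha(i) = 1$ and the last term equals $\sum_i \beta(i) = 1$. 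This is the only place the simplex constraint on both arguments is used, and it is what collapses the contrast into a form amenable to a crude pointwise bound.

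Next I would specialize to $\alpha = u$ and $\beta = \pi$. Since $u(i) = 1/n$ for every $i$, the identity gives
\[
	\chi^2(u;\pi) = \frac{1}{n^2}\sum_i \frac{1}{\pi(i)} - 1.
\]
Now the hypothesis $\min_i \pi(i) \geq c/n$ enters: it is equivalent to $1/\pi(i) \leq n/c$ for every coordinate $i$, so summing over the $n$ coordinates yields $\sum_i 1/\pi(i) \leq n \cdot (n/c) = n^2/c$. Substituting this bound,
\[
	\chi^2(u;\pi) \leq \frac{1}{n^2}\cdot\frac{n^2}{c} - 1 = \frac{1}{c} - 1 = \frac{1-c}{c},
\]
which is exactly the claimed inequality.

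I do not anticipate a genuine obstacle here; the statement is elementary once the contrast is rewritten. The only point requiring a little care is the bookkeeping in the identity above — in particular checking that both the factor of $2$ in the cross term and the trailing sum reduce cleanly using $\sum_i u(i) = \sum_i \pi(i) = 1$, so that no residual terms survive. I would also note for completeness that the bound is tight: equality is approached when $\pi$ concentrates so that $n-1$ coordinates sit at the floor $c/n$, confirming that the constant $(1-c)/c$ cannot be improved under the stated hypothesis alone.
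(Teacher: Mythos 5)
Your proof is correct and follows essentially the same route as the paper's: both expand the square in the $\chi^2$-contrast, cancel the cross terms using $\sum_i u(i)=\sum_i \pi(i)=1$, and then apply the pointwise bound $\pi(i)\geq c/n$ to each term (the paper phrases this last step via the monotonicity of $(1-x)/x$, you via $1/\pi(i)\leq n/c$, which is the same inequality). The tightness remark is a small bonus not present in the paper but changes nothing.
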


\ifx\omitproofs\undefined
\begin{proof}
\begin{align*}
	\chi^2(u;\pi) 
	&= \sum_i { (1/n - \pi(i) )^2 \over \pi(i)  } 
	= \sum_i \left( { 1 \over n^2 \pi(i) } - {1 \over n}  \right) \\
	&= {1 \over n } \sum_i  { 1 - n \pi(i) \over n \pi(i) }  
	\leq {1 \over n } \sum_i  { 1 -c \over c }  
	= { 1 -c \over c } 
\end{align*}
Here we used the assumed lower bound on $\pi(i)$ and the fact that $(1-x)/x$ is decreasing in $x$.
\end{proof}
\fi

\begin{lemma}
\label{thm:fixedstepchi}
Let $\pi^t$ denote the distribution of the walkers after $t$ steps. Its $\chi^2$-divergence with respect to the PageRank vector, $\pi$, is
\[
	\chi^2(\pi^t;\pi) \leq \left({1-p_T \over p_T}\right) (1-p_T)^t.
\]
\end{lemma}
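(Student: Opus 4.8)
The plan is to reduce the claim to two ingredients: a bound on the $\chi^2$-contrast at initialization, and a per-step geometric contraction of the contrast along the chain. Since the walkers of Process~\ref{proc:fixedstep} start uniformly, we have $\pi^0 = u$, and Lemma~\ref{thm:chiub} already controls the initial contrast: because the teleportation term in $Q$ (Definition~\ref{def:pagerank}) deposits mass at least $p_T/n$ on every coordinate, we get $\min_i \pi(i) \ge p_T/n$, so Lemma~\ref{thm:chiub} applied with $c = p_T$ gives $\chi^2(\pi^0;\pi) = \chi^2(u;\pi) \le (1-p_T)/p_T$. It then suffices to show that one application of $Q$ shrinks the contrast by a factor $(1-p_T)$: iterating this $t$ times and multiplying by the initial bound yields exactly $\left(\tfrac{1-p_T}{p_T}\right)(1-p_T)^t$.

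The heart of the argument, and the step I expect to be the main obstacle since the chain is non-reversible and standard spectral-gap arguments do not transfer verbatim, is the per-step contraction $\chi^2(\pi^t;\pi) \le (1-p_T)\,\chi^2(\pi^{t-1};\pi)$. Here I would exploit the special structure of $Q$. For any probability vector $x$ we have $Qx = (1-p_T)Px + p_T u$, and the same identity holds for the fixed point $\pi = Q\pi$; subtracting cancels the uniform term and gives the clean recursion $\pi^t - \pi = (1-p_T)P(\pi^{t-1}-\pi)$. Writing $d = \pi^{t-1}-\pi$, the goal reduces to the weighted operator inequality $\sum_i (Pd)_i^2/\pi(i) \le \tfrac{1}{1-p_T}\sum_j d_j^2/\pi(j)$.

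To prove this inequality I would apply Cauchy--Schwarz to $(Pd)_i = \sum_j P_{ij}\,\pi(j)\cdot \tfrac{d_j}{\pi(j)}$ using the nonnegative weights $P_{ij}\pi(j)$, obtaining $(Pd)_i^2 \le (P\pi)_i \sum_j P_{ij}\, d_j^2/\pi(j)$. Two facts then close the argument: first, the fixed-point relation gives $P\pi = \tfrac{1}{1-p_T}(\pi - p_T u)$, which is $\le \tfrac{1}{1-p_T}\pi$ entrywise since $u \ge 0$, so $(P\pi)_i \le \pi(i)/(1-p_T)$; second, summing over $i$ and using that $P$ is column-stochastic ($\sum_i P_{ij} = 1$) collapses the double sum back to $\sum_j d_j^2/\pi(j)$. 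This is precisely the instance of the contrast bound for non-reversible chains that we invoke. Combining the resulting per-step contraction with the initial bound from Lemma~\ref{thm:chiub} completes the proof.
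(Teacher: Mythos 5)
Your proof is correct, but it takes a genuinely different route from the paper's. The paper treats the lemma as an instance of the contrast bound for non-reversible chains from \cite{pierre1999markov}: it writes $\chi^2(\pi^t;\pi) \le \lambda_2^t(\tilde{Q}Q)\,\chi^2(\pi^0;\pi)$, where $\tilde{Q} = DQ'D^{-1}$ with $D=\mathrm{diag}(\pi)$ is the multiplicative reversibilization, and then bounds $|\lambda_2(\tilde{Q}Q)|\le 1-p_T$ by invoking the known result $|\lambda_2(Q)|<1-p_T$ for the Google matrix \cite{haveliwala2003second,elden2004note,serra2005jordan} together with the similarity of $Q$ and $\tilde{Q}$; the initial contrast is handled by Lemma~\ref{thm:chiub} with $c=p_T$, exactly as you do. You instead prove the per-step contraction $\chi^2(\pi^t;\pi)\le(1-p_T)\,\chi^2(\pi^{t-1};\pi)$ directly: the cancellation $\pi^t-\pi=(1-p_T)P(\pi^{t-1}-\pi)$, Cauchy--Schwarz with weights $P_{ij}\pi(j)$, the entrywise bound $(P\pi)_i\le\pi(i)/(1-p_T)$ from the fixed-point equation, and column-stochasticity of $P$. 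All of these steps check out (note $\pi(j)\ge p_T/n>0$, so the divisions are safe, and the factor $(1-p_T)^2$ from the recursion times your $1/(1-p_T)$ operator estimate gives precisely the claimed $(1-p_T)$ per step). What your argument buys is self-containedness and rigor: it makes no appeal to eigenvalue results for the Google matrix, and it sidesteps the paper's delicate step of passing from the spectrum of $Q$ (and of the similar matrix $\tilde{Q}$) to the spectrum of the product $\tilde{Q}Q$ --- in general the eigenvalues of a product are not controlled by the eigenvalues of its factors, so your direct weighted-norm estimate is arguably the cleaner justification of exactly the inequality the paper needs. What the paper's route buys is generality: the contrast bound applies to any ergodic chain once the SLEM of the multiplicative reversibilization is controlled, whereas your argument leans on the specific additive teleportation structure of $Q$.
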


\ifx\omitproofs\undefined
\begin{proof}
	Since $Q$ is ergodic (but non-reversible) we can use the contrast bound in \cite{pierre1999markov}, which gives us
\[
	\chi^2(\pi^t;\pi) \leq \lambda_2^t(\tilde{Q}Q) \chi^2(\pi_0;\pi),
\]
where $\tilde{Q} = DQ'D^{-1}$, for $D=diag(\pi)$, is called the {\em multiplicative reversibilization} of $Q$. We want an upper bound on the second largest eigenvalue of $\tilde{Q}Q	 = D Q ' D^{-1} Q $. From the Perron-Frobenius theorem, we know that $\lambda_1(Q)=1$ and from \cite{haveliwala2003second,elden2004note,serra2005jordan},
$|\lambda_2(Q)| < 1 - p_T$.  Matrix $Q$ is similar to
$\tilde{Q}=D Q ' D^{-1}$, so they have the same spectrum. From this we get the  bound
\[
	|\lambda_2(\tilde{Q}Q)|  \leq 1 - p_T.
\]
The starting distribution $\pi_0$ is assumed to be uniform and every element of the PageRank vector is lower bounded by $p_T/n$. From Lemma~\ref{thm:chiub}, we get
\[
	\chi^2(\pi_0;\pi) \leq \left( 1 - p_T \over p_T \right),
\]
and putting everything together we get the statement.
\end{proof}
\fi 

\begin{process}[Truncated Geometric]
\label{proc:truncgeom}
Independent \\
walkers start on nodes selected uniformly at random and perform random walks on the {\em original} graph. This means that there is no teleportation and the walk is described by the TPM $P$ as defined in Section~\ref{sec:problemformulation}. Each walker performs a random number of steps before yielding a sample. Specifically, the number of steps follows a geometric distribution with parameter $p_T$. Any walkers still active after $t$ steps are stopped and their positions are acquired as samples. This means that the number of steps is given by the minimum of $t$ and a geometric random variable with parameter $p_T$. The number of walkers tends to infinity.
\end{process}

\begin{lemma}
\label{thm:truncatedequivalence}
The samples acquired from Process~\ref{proc:fixedstep} and Process~\ref{proc:truncgeom} follow the exact same distribution.

\end{lemma}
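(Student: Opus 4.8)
The plan is to couple the two processes through the number of $P$-steps each walker performs after its last ``fresh start.'' First I would decompose the augmented transition matrix as $Q = (1-p_T) P + p_T T$, where $T = \frac{1}{n} 1_{n \times n}$ is the teleportation matrix. The key structural observation is that $T$ maps every probability vector to the uniform distribution $u$: for any $v \in \simplexn$ we have $Tv = u$. Consequently a teleportation event completely erases the walker's history and resets its distribution to $u$ --- which is exactly the distribution from which \emph{both} processes start. This reset property is what makes the two dynamics comparable.

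Next I would analyze Process~\ref{proc:fixedstep} by conditioning on its teleportation pattern. Treating the initial uniform placement as a teleportation at step $0$, at each of the steps $1, \dots, t$ a teleportation occurs independently with probability $p_T$ and a $P$-move occurs with probability $1-p_T$. Let $j$ denote the number of consecutive $P$-moves performed since the most recent teleportation (or since the initial placement, if no teleportation has yet occurred). By the Markov property together with the reset identity $Tv = u$, the walker's position at time $t$ conditioned on $\{j = k\}$ is distributed exactly as $P^k u$, irrespective of where that last teleportation sat. A direct counting argument then gives the law of $j$: if the last teleportation among steps $1, \dots, t$ occurs at step $t-k$, the subsequent $k$ steps are pure $P$-moves, so $\mathbb{P}(j = k) = p_T (1-p_T)^k$ for $k = 0, \dots, t-1$; and the event $\{j = t\}$ is precisely ``no teleportation in $t$ steps,'' giving $\mathbb{P}(j = t) = (1-p_T)^t$.

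I would then read off the corresponding quantity for Process~\ref{proc:truncgeom} directly from its definition. There the number of $P$-steps is $\min(t, G)$ with $G \sim \mathrm{Geom}(p_T)$, i.e. $\mathbb{P}(G = k) = p_T(1-p_T)^k$ for $k \geq 0$, and conditioned on $\min(t,G) = k$ the sample is again $P^k u$. Computing the truncated law gives $\mathbb{P}(\min(t,G) = k) = p_T(1-p_T)^k$ for $k < t$ and $\mathbb{P}(\min(t,G) = t) = \mathbb{P}(G \geq t) = (1-p_T)^t$, matching the law of $j$ term by term. Since in each process the sample distribution is the mixture of the $P^k u$ against the same mixing weights, and since walkers are independent in both, the full joint sample distributions coincide, which is the claim.

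The main obstacle I anticipate is the careful handling of the truncation boundary alongside the ``reset'' bookkeeping: one must verify that the residual mass $\mathbb{P}(G \geq t) = (1-p_T)^t$ of the geometric matches the no-teleportation probability of Process~\ref{proc:fixedstep}, and that identifying the initial uniform placement with a step-$0$ teleportation is legitimate. Both points hinge on the identity $Tv = u$ and on the memorylessness of the inter-teleportation times noted in the text preceding the lemma; everything else reduces to routine geometric-series bookkeeping.
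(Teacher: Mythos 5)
Your proposal is correct and follows essentially the same route as the paper's proof: both condition on the number of $P$-moves since the last teleportation (using the fact that teleportation resets the law to $u$, the common starting distribution), derive the truncated-geometric law for that count, and match it against $\min(t,G)$ for Process~\ref{proc:truncgeom}. The only cosmetic difference is bookkeeping at the boundary --- the paper writes the residual as $p_T(1-p_T)^t + (1-p_T)^{t+1}$ split across two terms while you collapse it to $(1-p_T)^t$ --- and these are algebraically identical.
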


\ifx\omitproofs\undefined
\begin{proof}
Let $\pi_t$ denote the distribution of the walk after $t$ steps according to $Q$ (Process~\ref{proc:fixedstep}) and $\pi_t'$ denote the distribution of the samples provided by the truncated geometric process (Process~\ref{proc:truncgeom}). Note that the both have the same uniform starting distribution $\pi_0=\pi_0'= u = 1_{n\times 1}/n$. For the latter process, the sampling distribution is
\begin{equation}
\label{eqn:distrgeomtrunc}
	\pi_t' = \sum_{\tau=0}^t p_T (1- p_T)^\tau P^\tau u
	+ (1-p_T)^{t+1}P^t u .
\end{equation}
The last term corresponds to the cut-off we impose at time $t$. Now consider the definition of the TPM $Q$ (Definition~\ref{def:pagerank}). The Markov chain described by $Q$, teleports at each steps with probability $p_T$; otherwise, it just proceeds according to the TPM $P$. With every teleportation, the walker starts from the uniform distribution, $u$ -- any progress made so far is completely "forgotten." Therefore, we just need to focus on the epoch between the last teleportation and the cut-off time $t$.
The times between teleportation events are geometrically distributed with parameter $p_T$. The teleportation process is memory-less and reversible. Starting from time $t$ and looking backwards in time, the last teleportation event is a geometric number of steps away, and with probability $(1-p_T)^{t+1}$ it happens before the starting time $0$. In that case we know that no teleportation happens in $[0, t]$. The samples acquired from this process are given by
\begin{equation}
	\pi_t = \sum_{\tau=0}^t p_T (1- p_T)^\tau P^\tau u
	+ (1-p_T)^{t+1}P^t u,
\end{equation}
which is exactly the distribution for Process~\ref{proc:truncgeom} given in \eqref{eqn:distrgeomtrunc}.
\end{proof}
\fi

\begin{lemma}[Mixing Loss]
\label{thm:mixingloss}
Let $\pi^t \in \simplexn$ denote the distribution of the samples acquired through Process~\ref{proc:truncgeom}. The mass it captures (Definition~\ref{def:capturedmass}) is lower-bounded as follows.
\begin{equation*}
   \mu_k(\pi^t) \geq \mu_k(\pi) - \sqrt{(1-p_T)^{t+1} \over p_T }
\end{equation*}
\end{lemma}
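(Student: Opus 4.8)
The plan is to reduce the captured-mass guarantee to an $\ell_1$ bound on the sampling distribution, and then to convert the $\chi^2$-estimate already in hand into such an $\ell_1$ bound. First I would invoke Lemma~\ref{thm:truncatedequivalence} to identify the distribution $\pi^t$ of the truncated-geometric samples with the distribution of the Fixed-Step process after $t$ steps; this lets me apply Lemma~\ref{thm:fixedstepchi} directly, yielding $\chi^2(\pi^t;\pi) \leq (1-p_T)^{t+1}/p_T$.

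Next I would pass from $\chi^2$ to $\ell_1$ by Cauchy--Schwarz. Writing each term as $|\pi^t(i)-\pi(i)| = \frac{|\pi^t(i)-\pi(i)|}{\sqrt{\pi(i)}}\cdot\sqrt{\pi(i)}$ (legitimate since every $\pi(i)\geq p_T/n>0$) and summing, I get
\[
\|\pi^t-\pi\|_1 \leq \sqrt{\chi^2(\pi^t;\pi)}\cdot\sqrt{\textstyle\sum_i \pi(i)} = \sqrt{\chi^2(\pi^t;\pi)} \leq \sqrt{(1-p_T)^{t+1}\over p_T},
\]
since $\pi\in\simplexn$ sums to one. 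Call this bound $\epsilon_1$.

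The core step is the purely combinatorial inequality $\mu_k(\pi) - \mu_k(\pi^t) \leq \|\pi^t-\pi\|_1$. Let $S_\pi \in \mathrm{argmax}_{|S|=k}\pi(S)$ and $S^\ast \in \mathrm{argmax}_{|S|=k}\pi^t(S)$, so that $\mu_k(\pi)=\pi(S_\pi)$ and $\mu_k(\pi^t)=\pi(S^\ast)$ by Definition~\ref{def:capturedmass}. Setting $g(i)=\pi(i)-\pi^t(i)$ and using the optimality $\pi^t(S^\ast)\geq \pi^t(S_\pi)$, I would insert and cancel the $\pi^t$ terms to obtain
\[
\mu_k(\pi)-\mu_k(\pi^t) = \pi(S_\pi)-\pi(S^\ast) \leq g(S_\pi)-g(S^\ast) = \sum_{i\in S_\pi\setminus S^\ast} g(i) - \sum_{i\in S^\ast\setminus S_\pi} g(i).
\]
Because $|S_\pi|=|S^\ast|=k$, the index sets $S_\pi\setminus S^\ast$ and $S^\ast\setminus S_\pi$ are disjoint, so the right-hand side is at most $\sum_{i\in (S_\pi\setminus S^\ast)\cup(S^\ast\setminus S_\pi)}|g(i)| \leq \|g\|_1 = \|\pi^t-\pi\|_1 \leq \epsilon_1$, which is exactly the claim.

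The step I expect to be the main obstacle is the last one, precisely because of the need to avoid a spurious factor of $2$: the naive triangle-inequality chain $\pi(S^\ast)\geq \pi^t(S^\ast)-\epsilon_1\geq \pi^t(S_\pi)-\epsilon_1\geq \pi(S_\pi)-2\epsilon_1$ only delivers $2\epsilon_1$. The sharper bound hinges on observing that the two error contributions live on the two disjoint halves of the symmetric difference of $S_\pi$ and $S^\ast$, so jointly they consume at most the full $\ell_1$ mass $\|\pi^t-\pi\|_1$ rather than twice it. The remaining ingredients---the process equivalence and the Cauchy--Schwarz conversion---are routine.
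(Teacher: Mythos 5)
Your proposal is correct and follows essentially the same route as the paper: identify $\pi^t$ with the fixed-step distribution via Lemma~\ref{thm:truncatedequivalence}, bound $\chi^2(\pi^t;\pi)$ via Lemma~\ref{thm:fixedstepchi}, convert to an $\ell_1$ bound (you use Cauchy--Schwarz, which the paper explicitly cites as an alternative to its variational/KKT argument), and finish with the inequality $\mu_k(\pi)-\mu_k(\pi^t)\leq\|\pi-\pi^t\|_1$. Your symmetric-difference argument for that last inequality is in fact more rigorous than the paper's two-coordinate example plus ``this argument generalizes,'' and it correctly avoids the spurious factor of $2$.
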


\ifx\omitproofs\undefined
\begin{proof}
Let us define $\delta_i = \pi^t_i - \pi_i $. First we show that
\begin{equation}
\label{eqn:massl1}
\mu_k(\pi^t) \geq \mu_k(\pi) - \|\pi - \pi^t\|_1.
\end{equation}
To see this, first consider the case when $\delta_1=-\delta_2$ and $\delta_i=0$ for $i=3,...,n$. The maximum amount of mass that can be missed by $\pi^t$, in this case, is $|\delta_1|+|\delta_2|$. This happens when $\pi_1$ and $\pi_2$ are exactly $|\delta_1|+|\delta_2|$ apart and are flipped in the ordering by $\pi^t$. This argument generalizes to give us \eqref{eqn:massl1}.
Now assume that the $\chi^2$-divergence of $\pi^t$ with respect to the PageRank vector $\pi$ is bounded by $\epsilon^2$.
Now using a variational argument and the KKT conditions we can show that setting $\delta_i = \pi_i \epsilon$ for all $i$ gives the maximum possible $l_1$ error:
\begin{equation}
\label{eqn:l1chi2}
|| \pi - \pi^t ||_1 \leq \epsilon = \sqrt{ \chi^2(\pi^t;\pi) }.
\end{equation}
For another proof using the Cauchy-Schwarz inequality, see \cite{pierre1999markov}.
Finally, combining \eqref{eqn:l1chi2} with \eqref{eqn:massl1} and the results from Lemma~\ref{thm:truncatedequivalence} and \ref{thm:fixedstepchi} gives us the statement.
\end{proof}
\fi

\begin{lemma}[Sampling Loss]
\label{thm:samplingloss}
Let $\hat{\pi}_N$ be the estimator of Definition~\ref{def:estimator} using $N$ samples from the \rwa!\ system. This is essentially, Process~\ref{proc:truncgeom} with the added complication of {\em random synchronization} as explained in Section~\ref{sec:algo}. Also, let $\pi^t$ denote the sample distribution after $t$ steps, as defined in Lemma~\ref{thm:truncatedequivalence}. The mass captured by this process is lower bounded as follows, with probability at least $1-\delta$.
\[
\mu_k(\hat{\pi}_N)  \geq \mu_k(\pi^t) - 	 \sqrt{
		{ k \over \delta}
		\left[
			{ 1  \over N } + (1 - p_s^2)\pmeet(t)
		\right]
	},
\]
\end{lemma}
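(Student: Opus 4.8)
The plan is to treat $\hat\pi_N$ as an empirical estimate of its own mean and to control the captured-mass gap through the $\ell_2$ fluctuation of $\hat\pi_N$ around that mean. Writing $\hat\pi_N = \frac1N\sum_{l=1}^N e_{s_l^t}$, the single-walk invariance noted just before Definition~\ref{def:pmeet} (partial synchronization does not change the law of a single walk) gives $\mathbb{E}[e_{s_l^t}] = \pi^t$ for every walker, hence $\mathbb{E}[\hat\pi_N] = \pi^t$. So I would (i) bound the second moment $\mathbb{E}\|\hat\pi_N - \pi^t\|_2^2$, (ii) convert an $\ell_2$ bound into a captured-mass bound via the same swap/Lipschitz argument used in Lemma~\ref{thm:mixingloss}, and (iii) turn the expectation bound into a $1-\delta$ statement with Markov's inequality. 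The heart of the argument is step (i), since the walkers are \emph{not} independent.

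For step (i) I would expand $\mathbb{E}\|\hat\pi_N - \pi^t\|_2^2 = \sum_i \mathrm{Var}(\hat\pi_N(i))$ into the $N$ diagonal terms and the $N(N-1)$ pairwise cross terms. The diagonal part is $\frac1{N^2}\sum_{l}\sum_i \pi^t(i)(1-\pi^t(i)) \le \frac1N$, which is the source of the $1/N$ term. Summed over $i$, the cross terms collapse to $\frac{N(N-1)}{N^2}\big(\mathbb{P}(s_{l}^t = s_{l'}^t) - \|\pi^t\|_2^2\big)$ for a single representative pair, so everything reduces to bounding the pairwise collision excess $\mathbb{P}(s_{l}^t = s_{l'}^t) - \|\pi^t\|_2^2$. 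Under Definition~\ref{def:erasuremodel} erasures are decided independently per vertex, so two walkers that never share a vertex run on disjoint randomness and are independent; correlation can only be created on a shared vertex.

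The hard part is quantifying how much correlation a shared vertex creates, and this is where the factor $(1-p_s^2)$ enters. I would build a coupling of the true correlated pair $(s_l^t, s_{l'}^t)$ with a fully independent pair $(\tilde s_l^t, \tilde s_{l'}^t)$, each walker having marginal $\pi^t$, run on shared randomness so that the two pairs stay identical unless the walkers first meet and then the randomized synchronization fails to route them to their independently chosen out-edges. At a shared vertex both desired out-edges are synchronized with probability at least $p_s^2$ (property~3 of Definition~\ref{def:erasuremodel} is exactly what prevents negative correlation from spoiling this), so the coupling breaks at a meeting with probability at most $1-p_s^2$; since the walkers meet within $t$ steps with probability $\pmeet(t)$ (which, by the argument preceding Theorem~\ref{thm:pmeet}, equals the independent-walk meeting probability), the total failure probability is at most $(1-p_s^2)\pmeet(t)$. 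As the coupled independent pair satisfies $\mathbb{P}(\tilde s_l^t = \tilde s_{l'}^t) = \|\pi^t\|_2^2$, the collision excess is bounded by the failure probability, yielding $\mathbb{E}\|\hat\pi_N - \pi^t\|_2^2 \le \frac1N + (1-p_s^2)\pmeet(t)$. Getting this coupling and the $p_s^2$ book-keeping right while respecting all four properties of the erasure model is the main obstacle.

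It remains to pass from this $\ell_2$ control to the captured mass. Let $S^\star$ and $\hat S$ be the top-$k$ sets selected by $\pi^t$ and $\hat\pi_N$; using optimality of $\hat S$ for $\hat\pi_N$ and of $S^\star$ for $\pi^t$ exactly as in the swap argument of Lemma~\ref{thm:mixingloss}, the captured-mass gap is at most $\sum_{i \in S^\star \triangle \hat S}|\hat\pi_N(i) - \pi^t(i)|$, and since $|S^\star \triangle \hat S| \le 2k$, Cauchy--Schwarz over those coordinates bounds it by a constant multiple of $\sqrt{k}\,\|\hat\pi_N - \pi^t\|_2$. Finally I would apply Markov's inequality to the nonnegative variable $\|\hat\pi_N - \pi^t\|_2^2$: with probability at least $1-\delta$ it is at most $\frac1\delta\big(\frac1N + (1-p_s^2)\pmeet(t)\big)$, and substituting into the captured-mass bound gives $\mu_k(\hat\pi_N) \ge \mu_k(\pi^t) - \sqrt{\frac{k}{\delta}\big(\frac1N + (1-p_s^2)\pmeet(t)\big)}$, the harmless absolute constant from Cauchy--Schwarz being absorbed.
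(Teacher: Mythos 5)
Your proposal is correct and follows essentially the same route as the paper's proof: a variance decomposition of $\mathbb{E}\|\hat{\pi}_N-\pi^t\|_2^2$ into a $1/N$ diagonal part and pairwise cross terms, a coupling with fully independent walks that can only break when two walkers meet and a synchronization/blocking failure occurs (probability at most $(1-p_s^2)\pmeet(t)$, exactly the paper's ``time of first interference'' argument), followed by the Cauchy--Schwarz conversion to top-$k$ mass and Markov's inequality. The only cosmetic difference is your symmetric-difference swap argument, which picks up a harmless $\sqrt{2}$ that the paper's one-sided comparison of $\hat{\pi}_N(\hat{S}^*)$ against $\pi^t(S^*)$ avoids.
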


\ifx\omitproofs\undefined
\begin{proof}
In this proof, let $x_l^t$ denote the individual (marginal) walk distribution for walker $l$ at time $t$.
We know, that it follows the dynamics $x_l^{t+1} = P x_l^{t}$, for all $l \in [N]$, i.e.\  $x_l^t = x_1^t$. First we show that $\| \hat{\pi}_N - x_1^t\|_2$ is small. 
\begin{align}
	\label{eqn:cheby}
	\mathbb{P}( \| \hat{\pi}_N - x_1^t\|_2 > \epsilon )
	&\leq {\mathbb{E}[ \| \hat{\pi}_N - x_1^t\|_2^2] \over  \epsilon^2 } 
\end{align}
Here we used Markov's inequality. We use $s_l^t$ to denote the position of walker $l$ at time $t$ as a vector. For example, $s_l^t = e_i$, if walker $l$ is at state $i$ at time $t$.
Now let us break down the norm on the numerator of \eqref{eqn:cheby}.
\begin{align}
	\label{eqn:empiricalvariance}
	\| \hat{\pi}_N &- x_1^t\|_2^2
		= \bigg\Vert {1 \over N} \sum_l ( s_l^t - x_1^t)\bigg\Vert_2^2 \notag \\
	    &= {1 \over N^2} \sum_l \| s_l^t - x_1^t\|_2^2
	    	+ {1 \over N^2} \sum_{l \neq k} ( s_l^t - x_1^t)'( s_k^t - x_1^t)
\end{align}
For the diagonal terms we have:
\begin{align}
	\label{eqn:samplevariance}
	\mathbb{E}[\| s_l^t - x_1^t\|_2^2]
	&=\sum_{i \in [n]} \mathbb{E}\left[\| s_l^t - x_1^t\|_2^2 | s_l^t=i \right] \mathbb{P}(s_l^t=i) \notag \\
		&=\sum_{i \in [n]} \| e_i^t - x_1^t\|_2^2 x_1^t(i) 
		= 1 - \| x_1^t \|_2^2 \leq 1
\end{align}

Under the edge erasures model, the trajectories of different walkers are not generally independent. For example, if they happen to meet, they are likely to make the same decision for their next step, since they are faced with the same edge erasures. Now we prove that even when they meet, we can consider them to be independent with some probability that depends on $p_s$. 

Consider the position processes for two walkers, $\{ s_1^t\}_{t}$ and
$\{ s_2^t\}_{t}$. At each step $t$ and node $i$ a number of out-going edges are erased. Any walkers on $i$, will choose uniformly at random from the remaining edges. Now consider this {\em alternative process}.

\begin{process}[Blocking Walk]
\label{def:blockingprocess}
\label{proc:blockingwalk}

A blocking walk on the graph under the erasure model, follows these steps.
\begin{enumerate}
\item Walker $l$ finds herself on node $i$ at time $t$.
\item Walker $l$ draws her next state uniformly from the {\em full} set of out-going edges. 
\[
	w \sim \mathrm{Uniform}(\mathcal{N}_o(i))
\]
\item If the edge $(i,w)$ is erased at time $t$, the walker cannot traverse it. We call this event a {\em block} and denote it by $B_l^t$. In the event of a block:
\begin{itemize}
\item Walker redraws her next step from the out-going edges of $i$ {\em not erased} at time $t$.
\item Otherwise, $w$ is used as the next state.
\end{itemize}
\end{enumerate}
\end{process}

A blocking walk is exactly equivalent to our original process; walkers end up picking a destination uniformly at random among the edges not erased. From now on we focus on this description of our original process. We use the same notation: $\{ s_l^t\}_{t}$ for the position process and $\{ x_l^t\}_{t}$ for the distribution at time $t$.

Let us focus on just two walkers, $\{ s_1^t\}_{t}$ and $\{ s_2^t\}_{t}$ and consider a third process: two independent random walks on the same graph. We assume that these walks operate on the full graph, i.e.\ no edges are erased. We denote their positions by $\{ v_1^t\}_{t}$ and $\{ v_2^t\}_{t}$ and their marginal distributions by $\{ z_1^t\}_{t}$ and $\{ z_2^t\}_{t}$.

\begin{definition}[Time of First Interference]
For two blocking walks, $\tau_I$ denotes the earliest time at which they meet and at least one of them experiences blocking. 
\[
	\tau_{I} = \min \left\{ 
			t : \{s_1^t = s_2^t\} \cap ( B_1^t \cup B_2^t)
	\right\}
\]
We call this quantity the {\em time of first interference}.
\end{definition}

\begin{lemma}[Process equivalence]
\label{thm:equivalence}
For two walkers, the blocking walk and the independent walk are identical until the time of first interference. That is, assuming the same starting distributions, $x_1^0 = z_1^0$ and $x_2^0 = z_2^0$, then
\[
 	x_1^t = z_1^t \quad  \textrm{and} \quad x_2^t = z_2^t \qquad \forall t \leq \tau_I.
\]
\end{lemma}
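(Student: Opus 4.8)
The plan is to exhibit an explicit coupling of the pair of blocking walks $(s_1^t, s_2^t)$ with the pair of independent walks $(v_1^t, v_2^t)$ on a common probability space, under which the two trajectories coincide path-wise for every $t \le \tau_I$. Since coincident trajectories have identical marginals, this immediately yields $x_l^t = z_l^t$ for $l \in \{1,2\}$ and all $t \le \tau_I$; it also delivers the stronger \emph{joint} agreement that the downstream bound on the cross term in the proof of Lemma~\ref{thm:samplingloss} actually needs. I would start both pairs from the same positions, $s_l^0 = v_l^0$, and then build the coupling inductively one step at a time.

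The coupling splits into two cases according to whether the two blocking walkers are co-located. First I would record that a \emph{single} blocking walk has transition kernel $P$: conditioned on sitting at $i$, it lands uniformly on $\mathcal{N}_o(i)$, because by the symmetry of erasures (property~4 of Definition~\ref{def:erasuremodel}) each surviving out-edge is equally likely, exactly as already noted for $\mathbb{P}(s_1^{t+1}=j \mid s_1^t=i) = 1/\dout(i)$. Now, at a step where $s_1^t \neq s_2^t$, the two walkers sit on disjoint out-neighbourhoods, so by the cross-vertex independence of erasures (property~1) their moves are independent, each marginally $P$; I therefore set $v_l^{t+1} := s_l^{t+1}$ for both $l$, which keeps the trajectories identical while realizing a genuine independent $P$-step for the $v$-pair. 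In particular, a single-walker block in this case is harmless: $v_l$ simply copies the blocking walk's redrawn move, which is still marginally $P$. At a step where $s_1^t = s_2^t = i$, I would instead couple through the \emph{first draws}: letting $w_l \sim \mathrm{Uniform}(\mathcal{N}_o(i))$ be the independent first draws from step~2 of Process~\ref{proc:blockingwalk}, I set $v_l^{t+1} := w_l$. Because $w_1, w_2$ are i.i.d.\ uniform and independent of the erasures, the $v$-pair again makes an independent $P$-step.

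The induction then runs as follows. Assuming the trajectories have coincided through time $t$ and that $t+1 \le \tau_I$, I would argue coincidence at $t+1$. In the co-located case, $t+1 \le \tau_I$ forces that neither walker blocks at step $t$ (a block while co-located would make $\tau_I \le t$), so each blocking walker uses its first draw and $s_l^{t+1} = w_l = v_l^{t+1}$. In the non-co-located case coincidence holds by construction. Hence $s_l^t = v_l^t$ for all $t \le \tau_I$. It remains to check that the globally assembled $(v_1, v_2)$ is indeed a pair of \emph{independent} walks obeying $P$; this holds because in either case the one-step conditional law of $(v_1^{t+1}, v_2^{t+1})$ given the history factorizes as the product of the two $P$-marginals.

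The main obstacle is the co-located step, where the two blocking walkers share a single erasure pattern and are genuinely correlated. The trick that makes the argument go through is to couple the independent walks to the erasure-free \emph{first} draws rather than to the realized (possibly redrawn) blocking moves, so that the shared-erasure correlation only ever surfaces at the very first co-located block, which is precisely the event defining $\tau_I$. Verifying that this choice is simultaneously consistent with the non-co-located rule and still produces a bona fide pair of independent $P$-walks is the one place where properties~1 and~4 of Definition~\ref{def:erasuremodel} must be invoked with care.
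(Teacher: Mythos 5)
Your proposal is correct and follows essentially the same route as the paper: the paper's (much terser) proof argues verbally that the walkers make independent, uniform-over-the-full-out-neighbourhood choices whenever they are apart (by cross-vertex independence and symmetry of erasures) and, when co-located without a block, act on independent first draws — which is exactly the coupling you construct explicitly. Your version merely makes the coupling and the induction on $t\le\tau_I$ explicit, which is a faithful formalization rather than a different argument.
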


\begin{proof}
The two processes are equivalent for as long as the blocking walkers make {\em independent} decisions effectively picking {\em uniformly from the full set of edges} (before erasures). From the independence in erasures across time and vertices in Definition~\ref{def:erasuremodel}, as long as the two walkers do not meet, they are  making an independent choices. Furthermore, since erasures are symmetric, the walkers will be effectively choosing uniformly over the full set of out-going edges.

Now consider any time $t$ that the blocking walkers meet. As long as neither of them blocks, they are by definition taking independent steps uniformly over the set of all outgoing edges, maintaining equivalence to the independent walks process. This concludes the proof.
\end{proof}

\begin{lemma}
Let all walkers start from the uniform distribution.
The probability that the time of first interference comes before time $t$ is upper bounded as follows. $
	\mathbb{P}(\tau_I \leq t ) \leq (1 - p_s^2) \pmeet(t)$
\end{lemma}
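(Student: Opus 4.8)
The plan is to separate two ingredients: a \emph{per-meeting} bound showing that whenever the two walkers sit on the same node the chance that at least one is blocked is at most $1-p_s^2$, and a decomposition of $\{\tau_I \le t\}$ according to the \emph{first} interference time, which — via Lemma~\ref{thm:equivalence} — lets me replace the correlated blocking walks by genuinely independent walks on the event being analyzed.

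First I would establish the per-meeting bound. Suppose both walkers occupy a node $i$. In the blocking-walk description each draws a next vertex $w_1,w_2$ independently and uniformly from the full out-neighborhood of $i$, and walker $l$ is blocked iff the drawn edge is erased; so neither is blocked iff both $(i,w_1)$ and $(i,w_2)$ are preserved. I would lower bound this probability by $p_s^2$ after conditioning on $(w_1,w_2)$. If $w_1=w_2$, the event is a single preservation, which has probability at least $p_s$ by the second property of Definition~\ref{def:erasuremodel}. If $w_1\neq w_2$, I would write the joint preservation probability as $\mathbb{P}(\overline{E_{i,w_1}^t})\,\mathbb{P}(\overline{E_{i,w_2}^t}\mid \overline{E_{i,w_1}^t})$ and bound the two factors by $p_s$ using the second and third properties respectively. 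In both cases the probability of no block is at least $p_s^2$, so conditioned on a meeting at any node the probability of a block is at most $1-p_s^2$; by the time-independence of erasures (first property) this holds conditionally on the whole past.

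Next I would write $\{\tau_I \le t\}$ as the disjoint union $\bigcup_{\tau=0}^t \{\tau_I=\tau\}$ and bound each term. On $\{\tau_I=\tau\}$ no interference has occurred before $\tau$, so by Lemma~\ref{thm:equivalence} the blocking walks coincide, under the natural coupling, with two independent walks up to time $\tau$; in particular the meeting at time $\tau$ is exactly a meeting $v_1^\tau=v_2^\tau$ of the independent walks. Since the block at time $\tau$ depends only on the erasures at time $\tau$, which are independent of the positions reached so far, I get $\mathbb{P}(\tau_I=\tau)\le \mathbb{P}(v_1^\tau=v_2^\tau)\,\mathbb{P}(\text{block at }\tau\mid\text{meet})\le (1-p_s^2)\,\mathbb{P}(v_1^\tau=v_2^\tau)$. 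Summing over $\tau$ yields $\mathbb{P}(\tau_I\le t)\le (1-p_s^2)\sum_{\tau=0}^t \mathbb{P}(v_1^\tau=v_2^\tau)$, and I would finish by identifying $\sum_\tau \mathbb{P}(v_1^\tau=v_2^\tau)$ with the collision quantity controlled by Theorem~\ref{thm:pmeet}.

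The main obstacle is the correlation between the walkers once they have met: their joint law is not a product, so a meeting probability cannot be factored directly out of the blocking-walk dynamics. The first-interference decomposition is precisely the device that circumvents this, because on $\{\tau_I=\tau\}$ the walks are still coupled to independent walks up to $\tau$, pushing all erasure-induced correlation strictly past $\tau$ where it never enters the bound. One subtlety I would flag is that the decomposition produces $1-p_s^2$ multiplying $\sum_\tau \mathbb{P}(v_1^\tau=v_2^\tau)$, the expected number of coincidences, rather than the probability of ever coinciding; this is harmless here, since that sum is exactly the quantity the proof of Theorem~\ref{thm:pmeet} bounds (by $1/n+t\|\pi\|_\infty/p_T$), so substituting gives the stated $(1-p_s^2)\,\pmeet(t)$ form. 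Conditioning instead only on the first meeting would force control of an a priori unbounded number of later meetings, which is exactly why the per-time decomposition is the cleaner route.
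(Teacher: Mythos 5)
Your proposal is correct and follows essentially the same route as the paper: a decomposition of $\{\tau_I \le t\}$ over the time $\tau$ of the (first) interference, the bound $\mathbb{P}(B_1^\tau \cup B_2^\tau \mid \text{meet}) \le 1-p_s^2$ obtained from properties 2 and 3 of Definition~\ref{def:erasuremodel}, and the identification of $\sum_\tau \mathbb{P}(\text{collision at }\tau)$ with the quantity actually bounded in the proof of Theorem~\ref{thm:pmeet}. If anything your write-up is slightly more careful than the paper's, which sums the meeting probabilities over all $\tau$ under the blocking dynamics without explicitly invoking Lemma~\ref{thm:equivalence} to justify substituting the independent-walk collision probabilities, and which quietly equates that sum with $\pmeet(t)$ --- the same harmless abuse you flag.
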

\begin{proof}
Let $M_t$ be the event of a meeting at time $t$, $M_t \triangleq \left\{   s_1^t = s_2^t  \right\}$.
In the proof of Theorem~\ref{thm:pmeet}, we establish that $\mathbb{P}(M_t) \leq \rho^t/n$,where $\rho$ is the maximum row sum of the transition matrix $P$.
Now denote the event of an interference at time $t$ as follows. $
	I_t \triangleq M_t \cap ( B_1^t \cup B_2^t)$,
where $B_1^t$ denotes the event of blocking, as described in Definition~\ref{def:blockingprocess}. Now, 
\begin{align*}
	\mathbb{P}(I_t)
	 &= \mathbb{P}( M_t \cap ( B_1^t \cup B_2^t))
	 = \mathbb{P}( B_1^t \cup B_2^t | M_t ) \pmeet(t).
\end{align*}
For the probability of a block given that the walkers meet at time $t$,
\begin{align*}
	\mathbb{P}( B_1^t \cup B_2^t | M_t ) 
	&= 1 - \mathbb{P}( \overline{B_1^t} \cap \overline{B_2^t} | M_t ) \\
	&= 1 - \mathbb{P}( \overline{B_2^t} |\overline{B_1^t}, M_t ) \mathbb{P}( \overline{B_1^t} \ | M_t ) 
	\leq 1 - p_s^2.
\end{align*}
To get the last inequality we used, from Definition~\ref{def:erasuremodel}, the lower bound on the probability that an edge is not erased, and the lack of negative correlations in the erasures.

Combining the above results, we get
\begin{align*}
\mathbb{P}(\tau_I &\leq t) 
= \mathbb{P}\left(\sum_{\tau=1}^t \mathbb{I}_{\{ I_\tau \} } \geq 1\right) 
\leq \mathbb{E}\left[\sum_{\tau=1}^t \mathbb{I}_{\{ I_\tau \} } \right]
= \sum_{\tau=1}^t \mathbb{P}(I_\tau) \\
&\leq \sum_{\tau=1}^t (1-p_s^2) \mathbb{P}(M_\tau) 
= {1 - p_s^2 \over n } \sum_{\tau=1}^t \rho^\tau
= (1 - p_s^2)\pmeet(t)
\end{align*}
which proves the statement. 
\end{proof}

Now we can bound the off-diagonal terms in \eqref{eqn:samplevariance}.
\begin{align*}
	\mathbb{E}\bigg[( s_l^t - x_1^t)'( s_k^t &- x_1^t)\bigg]& \\
	=& \mathbb{E}\bigg[( s_l^t - x_1^t)'( s_k^t - x_1^t) \big| \tau_I \leq t \bigg] 
	\mathbb{P}(\tau_I \leq t) \\
	&+ \mathbb{E}\bigg[( s_l^t - x_1^t)'( s_k^t - x_1^t) \big| \tau_I > t \bigg] 
		\mathbb{P}(\tau_I > t)
\end{align*}
In the second term, the case when $l$, $k$ have not interfered, by Lemma~\ref{thm:equivalence}, the trajectories are independent and the cross-covariance is $0$.
In the first term, the cross-covariance is maximized when $s_l^t = s_k^t$. That is,
\[
\mathbb{E}\bigg[( s_l^t - x_1^t)'( s_k^t - x_1^t) \big| \tau_I \leq t \bigg]
\leq 
\mathbb{E}[\| s_l^t - x_1^t\|_2^2] \leq 1
\]
From this we get
\begin{equation}
 \mathbb{E}\bigg[( s_l^t - x_1^t)'( s_k^t - x_1^t)\bigg]
 \leq (1-p_s^2)\pmeet(t),
\end{equation}
and in combination with \eqref{eqn:samplevariance}, we get from \eqref{eqn:empiricalvariance} that
\begin{align*}
	\mathbb{E}\bigg[ \| \hat{\pi}_N - x_1^t\|_2^2 \bigg]
	    &\leq {1 \over N} 
	    	+ {(N-1)(1-p_s^2)\pmeet(t) \over N} .
\end{align*}
Finally, we can plug this into \eqref{eqn:cheby}, and since all marginals $x_l^t$ are the same, and denoted by $\pi^t$, we get
\begin{align}
\label{eqn:chebyresult}
	\mathbb{P}( \| \hat{\pi}_N - \pi^t\|_2 > \epsilon )
	&\leq {1+(1-p_s^2)\pmeet(t)(N-1) \over N \epsilon^2}.
\end{align}

\newcommand\restr[2]{{
  \left.\kern-\nulldelimiterspace 
  #1 
  \vphantom{\big|} 
  \right|_{#2} 
  }}
 
Let $\restr{\pi^t}{S}$ denote the restriction of the vector $\pi^t$ to the set $S$. That is,
$\restr{\pi^t}{S}(i)=\pi^t(i)$ if $i \in S$ and $0$ otherwise.
Now we show that for any set $S$ of cardinality $k$,
\begin{align}
\label{eqn:epsilonprime}
 | \pi^t(S) - \hat{\pi}_N(S)|
 &\leq \| \restr{(\pi^t - \hat{\pi}_N)}{S} \|_1 \notag 
  \leq \sqrt{k} \| \restr{(\pi^t - \hat{\pi}_N)}{S} \|_2 \notag \\
   &\leq \sqrt{k} \| \pi^t - \hat{\pi}_N \|_2 
\end{align}
Here we used the fact that for $k$-length vector $x$, $\|x\|_1 \leq \sqrt{k} \|x\|_2$
and $\| \restr{x}{S} \| \leq \| x \|$.
We define the top-$k$ sets 
\[
	\hat{S}^* = \mathrm{argmax}_{S \subset [n], |S|=k} \hat{\pi}_N(S)
\]
\[
	S^* = \mathrm{argmax}_{S \subset [n], |S|=k} \pi^t(S).
\]
Per these definitions,
\begin{align}
\label{eqn:maxima}
		\hat{\pi}_N(\hat{S}^*)
		&= \max_{S \subset [n], |S|=k} \hat{\pi}_N(S) \notag \\
		&\geq \hat{\pi}_N(S^*) 
		\geq \pi^t(S^*) - \sqrt{k} \| \pi^t - \hat{\pi}_N \|_2 .
\end{align}
The last inequality is a consequence of \eqref{eqn:epsilonprime}.
Now using the inequality in \eqref{eqn:chebyresult} and denoting the LHS probability as $\delta$, we get the statement of Lemma~\ref{thm:samplingloss}.
\end{proof}
\fi

Combing the results of Lemma~\ref{thm:mixingloss} and Lemma~\ref{thm:samplingloss}, we establish the main result, Theorem~\ref{thm:main}.

\subsection{Proof of Theorem~\ref{thm:pmeet} }
\label{sec:proofpmeet}

\begin{proof}
Let $u \in \simplexn$ denote the uniform distribution over $[n]$, i.e.\ $u_i = 1/n$.
The two walks start from the same initial uniform distribution, $u$, and independently follow the same law, $Q$. Hence, at time $t$ they have the same marginal distribution, $p^{t} = Q^t u$. 
From the definition of the augmented transition probability matrix, $Q$, in Definition~\ref{def:pagerank}, we get that
\[
	\pi_i \geq { p_T \over  n }, \quad \forall i \in [n].
\]
Equivalently, there exists a distribution $q \in \simplexn$ such that
\[
	\pi = p_T u + (1-p_T) q.
\]
Now using this, along with the fact that
$\pi$ is the invariant distribution associated with $Q$ (i.e.\ $\pi = Q^t \pi$ for all $t \geq 0$) we get that for any $t \geq 0$,
\begin{align*}
\| \pi \|_\infty  
&= \| Q^t \pi \|_\infty \\
&= \| Q^t p_T u + Q^t (1-p_T) q \|_\infty \\
&\geq p_T \| Q^t u \|_\infty.
\end{align*}
For the last inequality, we used the fact that $Q$ and $q$ contain non-negative entries. 
Now we have a useful upper bound for the maximal element of the walks' distribution at time t.
\begin{equation}
\label{eqn:distifnty}
	\| p^t \|_\infty = \| Q^t u  \|_\infty \leq  { \| \pi \|_\infty \over  p_T }
\end{equation}
Let $M_t$ be the indicator random variable for the event of a meeting at time $t$.
\[
	M_t = \mathbb{I}_{\left\{  \textrm{walkers meet at time } t  \right\}}
\]
Then, $\mathbb{P}(M_t=1) = \sum_{i=1}^n p^{t}_i p^{t}_i = \| p^{t} \|_2^2$.
Since $p^{0}$ is the uniform distribution, i.e.\ $p^{0}_i = { 1 \over n}$ for all $i$, then $\| p^{0} \|_2^2 = {1 \over n}$.
We can also bound the $l_2$ norm of the distribution at other times.
First, we upper bound the $l_2$ norm by the $l_\infty$ norm.
\[
	\| p \|_2^2 = \sum_i p_i^2 \leq \sum_i p_i \| p \|_\infty = \| p \|_\infty
\]
Here we used the fact that $p_i\geq 0$ and $\sum p_i = 1$.

Now, combining the above results, we get
\begin{align*}
p_{\cap}(t) 
&= \mathbb{P}\left(\sum_{\tau=0}^t M_\tau \geq 1\right) 
\leq \mathbb{E}\left[\sum_{\tau=0}^t M_\tau \right]
= \sum_{\tau=0}^t \mathbb{E}[M_\tau]  \\
&= \sum_{\tau=0}^t \mathbb{P}(M_\tau=1) 
=\sum_{\tau=0}^t \| p^{\tau} \|_2^2 \leq \sum_{\tau=0}^t \| p^{\tau}\|_\infty  \\
&\leq {1 \over n } + {t \| \pi \|_\infty \over p_T}.
\end{align*}
For the last inequality, we used \eqref{eqn:distifnty} for $t \geq 1$ and $\| p^0 \|_2^2=1/n$.
This proves the theorem statement. 
\end{proof}

\subsection{Proof of Proposition~\ref{thm:powerlawmax}}
\label{sec:proofpowerlawmax}

\begin{proof}
The expected maximum value of $n$ independent draws from a power-law distribution with parameter $\theta$, is shown in 
\cite{newman2005power}
to be
\[
	\mathbb{E} x_{max} = O(n^{-{ 1 \over \theta-1}}).
\]
Simple application of Markov's inequality, gives us the statement.
\end{proof}

\newpage

\end{appendix}

\balance

\end{document}